\documentclass[12pt]{article}

\usepackage{amssymb, amsmath, amsthm, mathtools, commath}
\usepackage{mathrsfs, bbm, bm}

\usepackage{graphicx, subcaption, xcolor}

\usepackage{booktabs, array, float, multirow, arydshln}

\usepackage[T1]{fontenc}
\usepackage[utf8]{inputenc}
\usepackage{fullpage}
\usepackage{makeidx}
\usepackage{comment}
\usepackage{enumitem}

\usepackage[round]{natbib}
\usepackage[unicode,hidelinks]{hyperref}

\hypersetup{
  pdftitle={Toward design-based inference for data integration},
  pdfauthor={Andrius Čiginas, Ieva Burakauskaitė, Jae Kwang Kim},
  pdfkeywords={data integration; design-based inference; generalized regression estimator; non-probability sample; optimal sampling design; sequential sampling}
}

\theoremstyle{plain}
\newtheorem{proposition}{Proposition}
\newtheorem{theorem}{Theorem}

\newtheorem{lemma}{Lemma}

\theoremstyle{definition}
\newtheorem{remark}{Remark}

\newcommand{\bx}{\bm{x}}

\newcommand{\bbeta}{\bm{\beta}}

\newcommand{\bB}{\bm{B}}
\newcommand{\bOmega}{\bm{\Omega}}

\newcommand{\btau}{\bm{\tau}}
\newcommand{\bM}{\bm{M}}
\newcommand{\bA}{\bm{A}}

\newcommand{\bX}{\bm{X}}

\newenvironment{keywords}{%
  \par\noindent\textbf{Key words:}\ }{%
  \par\medskip}

\begin{document}

\title{\textbf{Toward design-based inference for data integration}}

\author{
Andrius~Čiginas\textsuperscript{1}\thanks{Corresponding author. E-mail: \texttt{andrius.ciginas@mif.vu.lt}.} \qquad
Ieva~Burakauskaitė\textsuperscript{1} \qquad
Jae~Kwang~Kim\textsuperscript{2}\\[0.4em]
{\small \textsuperscript{1}Faculty of Mathematics and Informatics, Vilnius University, Vilnius, Lithuania}\\[0.25em]
{\small \textsuperscript{2}Department of Statistics, Iowa State University, Ames, USA}
}

\date{}

\maketitle

\begin{abstract}

Integrating non-probability samples into finite-population inference typically requires modeling unknown selection probabilities under a missing-at-random (MAR) assumption that is difficult to verify. We propose a design-based alternative in which the non-probability sample is treated as a fully observed certainty stratum and a probability sample is drawn only from the complementary, previously unsampled units. Within this sequential framework, we develop two generalized regression estimators---one fitting the outcome model separately in the complementary stratum, the other pooling both samples---and make two distinct contributions. First, both estimators are design-consistent and admit consistent variance estimators with no assumption whatsoever on the non-probability selection mechanism, including under not-missing-at-random (NMAR) selection. Second, under a working superpopulation model that holds in both strata, the pilot non-probability sample can be used to construct second-stage inclusion probabilities that achieve Isaki--Fuller asymptotic optimality for the separate estimator; this optimality claim relies on assumptions strictly stronger than MAR, but its failure does not invalidate the consistency results above. A diagnostic test for coefficient homogeneity is proposed to guide the choice between the two estimators. Simulations confirm that the sequential estimators remain essentially unbiased under both MAR and NMAR, while propensity-adjusted competitors can be severely biased under NMAR. Two applications from Lithuanian official statistics illustrate that separate regression is preferable when the pilot stratum and its complement are strongly heterogeneous, whereas combined regression offers a modest efficiency gain when the two strata are similar.

\end{abstract}

\begin{keywords}
data integration; design-based inference; generalized regression estimator; non-probability sample; optimal sampling design; sequential sampling
\end{keywords}

\section{Introduction}

In recent years, statisticians have witnessed growing interest in combining different data sources to produce more accurate and cost-effective estimates of population parameters. Survey sampling, including official statistics surveys, traditionally relies on probability samples, where selection probabilities are known by design, ensuring the validity and unbiasedness of statistical inference. However, the increasing costs and declining response rates associated with probability sampling have stimulated the exploration of alternative data sources. Among these alternatives, non-probability samples such as voluntary samples and incomplete administrative data have gained popularity due to their relatively low cost and ease of data collection. Nevertheless, integrating non-probability samples into sample surveys presents substantial methodological challenges, primarily because the underlying selection mechanism is usually unknown, leading to significant selection biases if not properly addressed.

The prevailing statistical literature typically tackles these biases by invoking strong assumptions such as missing-at-random (MAR) and positivity (common support condition). Under these assumptions, adjustments through weighting or modeling can yield approximately unbiased and consistent estimators. However, such assumptions are often questionable and challenging to verify in practice, particularly when critical determinants of selection are unobserved or when the population of interest includes heterogeneous subgroups. Thus, the robustness of estimators relying on these assumptions remains uncertain, motivating the need for alternative strategies that rely less on unverifiable conditions.

We propose a sequential design-based framework in which the non-probability sample is treated as a fully observed certainty stratum, and the probability sample is then drawn only from the complementary, previously unsampled units. Within this framework, we make two distinct contributions, each supported by its own set of assumptions and addressing a different aspect of finite-population inference.

The first contribution concerns the \emph{validity} of inference. We establish that the proposed generalized regression (GREG) estimators are design-consistent under standard regularity conditions on the second-stage probability design and admit consistent plug-in variance estimators, with no assumption whatsoever on the selection mechanism of the non-probability sample. In particular, design consistency continues to hold under not-missing-at-random (NMAR) selection, where standard inverse probability weighting and doubly robust estimators can be severely biased. This robustness reflects a structural property of the proposed framework rather than a modeling choice: once the non-probability sample is treated as a certainty stratum, its inclusion probabilities never enter the estimator and therefore need not be modeled at all.

The second contribution concerns the \emph{efficiency} of the probability sample design. Under a working superpopulation model whose mean and variance functions hold in both strata, the non-probability sample serves as a pilot from which variance components can be estimated and second-stage inclusion probabilities can be constructed. In particular, for the GREG estimator that fits the outcome model separately in the complementary stratum (the separate GREG estimator), we show that the resulting design achieves Isaki--Fuller asymptotic optimality \citep{isaki1982}. This optimality claim relies on a model homogeneity assumption across strata that is strictly stronger than MAR. Importantly, however, the two contributions are layered rather than coupled: when the homogeneity assumption fails, the optimality guarantee is lost, but the design-consistency and variance-estimation results from the first contribution remain intact. The pilot-based design, therefore, offers efficiency gains without compromising inference robustness.

A related design-based data integration approach was previously considered by \citet{kimtam2020}, who similarly avoided modeling the unknown selection probabilities by post-stratifying on non-probability sample membership. In their framework, however, the probability sample is drawn from the entire population and may overlap with the non-probability sample, leading to duplicated observations, reduced efficiency, and the need for a fusion tuning parameter combining the two sources. Our sequential design draws the probability sample only from the complement of the non-probability sample, eliminating redundancy, removing the tuning parameter, and enabling the pilot to inform the second-stage design directly.

The practical relevance of our approach is demonstrated through real-world data applications. We focus on scenarios commonly encountered in official statistics, where auxiliary variables from administrative registers are known for all units in the target population. Such variables allow us to design the probability sampling step efficiently and to construct the regression-based estimators by explicitly accounting for population variability.

The rest of the paper proceeds as follows. Section~\ref{sec:2} introduces the measurement framework and reviews the independent-sampling literature, highlighting its reliance on unverifiable MAR assumptions and the problem of duplicate observations. Section~\ref{sec:3} presents the proposed sequential sampling framework and the resulting estimators. Section~\ref{sec:4} details the variance function estimation and the coefficient homogeneity test. Section~\ref{sec:5} establishes the theoretical properties of the proposed estimators, organized along the two contributions described above: design consistency without modeling the non-probability mechanism, and asymptotic optimality of the pilot-based design under a working superpopulation model. Sections~\ref{sec:6} and~\ref{sec:7} provide simulations based on artificial and real data, respectively, and Section~\ref{sec:8} concludes with a summary of findings.

\section{Basic setup and review of literature}\label{sec:2}

\subsection{Setup and notation}

Consider a finite population $U = \{1, 2, \dots, N\}$ of $N$ units, and let $y_i$ denote the value of the study variable $y$ for unit $i \in U$. Our target parameter is the finite population total
\begin{equation*}
Y = \sum_{i\in U} y_i.
\end{equation*}

For each unit $i \in U$, an auxiliary vector $\bx_i$ is available, typically from an administrative register. The study variable $y$, in contrast, is observed only through samples from $U$. We consider two such samples: a probability sample $S_p$ of size $n_p = |S_p|$, drawn under a known design with positive first-order inclusion probabilities, and a non-probability sample $S_{np}$ of size $n_{np} = |S_{np}|$, obtained through an unknown selection mechanism such as a voluntary response or an administrative source.

Two scenarios are commonly considered, depending on where $y$ is recorded. In the first, more commonly studied scenario, $y$ is observed only on $S_{np}$, while $S_p$ records only $\bx_i$ and the design weights. The probability sample then plays a purely referential role: it is used to model the unknown selection mechanism of $S_{np}$ as a function of $\bx$ \citep{EV_2017,kimwang2019}, and the resulting propensity estimates serve to debias the non-probability sample. This framework underpins the inverse probability weighting (IPW) and doubly robust (DR) estimators of \citet{CLW_2020} reviewed in Section~\ref{sec:independent}.

In the second scenario, which we adopt throughout the paper, $y$ is observed on both $S_p$ and $S_{np}$. The probability sample then contributes direct observations of $y$ in addition to its design-based information about the population, and the two sources can be combined to estimate $Y$ more efficiently than either alone. This motivates the fusion-type estimator reviewed in Section~\ref{sec:independent} and the sequential estimators developed in Section~\ref{sec:3}.

\subsection{Independent sampling framework and its limitations}\label{sec:independent}

A common setting in the literature is that $S_p$ and $S_{np}$ are obtained independently from $U$; see, e.g., \citet{kimtam2020}, \citet{GR_2024}, and \citet{CKN_2025}. Because the selection mechanism of $S_{np}$ is unknown, inference typically rests on the MAR assumption: the unknown inclusion probabilities $\pi_i^{(np)}$ depend only on observed auxiliaries $\bx$. Under MAR, one fits a propensity score model to obtain estimates $\hat{\pi}_i^{(np)}$, $i \in S_{np}$, and constructs the IPW estimator
\begin{equation*}
  \hat{Y}_{\text{IPW}} = \sum_{i\in S_{np}}\frac{1}{\hat{\pi}_i^{(np)}}y_i.
\end{equation*}
Using an outcome regression model fitted on $S_{np}$, the IPW estimator can be augmented to obtain the DR estimator
\begin{equation*}
  \hat{Y}_{\text{DR}} = \hat{Y}_{\text{IPW}} + \left( \sum_{i\in U} \bx_i - \sum_{i \in S_{np}} \frac{1}{\hat{\pi}_i^{(np)}} \bx_i \right)^\top \hat{\bbeta}_{np},
\end{equation*} 
where $\hat{\bbeta}_{np}$ denotes the corresponding regression coefficient estimator. The DR estimator is consistent if either the outcome regression or the propensity model is correctly specified \citep{CLW_2020}. 

When $y$ is observed in both samples, the DR estimator can be naturally combined with the probability-sample GREG estimator
\begin{equation*}
  \hat{Y}_{\text{GREG}} =
  \sum_{i\in S_p}\frac{1}{\pi_i^{(p)}}y_i
  + \left( \sum_{i\in U} \bx_i 
  - \sum_{i\in S_p}\frac{1}{\pi_i^{(p)}}\bx_i \right)^\top \hat{\bbeta}_{p},
\end{equation*}
where $\pi_i^{(p)}$ are the first-order inclusion probabilities of $S_p$ and $\hat{\bbeta}_{p}$ is the regression coefficient estimator fitted on $S_p$. This leads to the fusion estimator
\begin{equation}\label{eq:fusion}
  \hat{Y}_{\text{GREG-DR},\alpha}
    = \alpha\,\hat{Y}_{\text{GREG}} + (1-\alpha)\,\hat{Y}_{\text{DR}},
    \quad \alpha\in[0,1].
\end{equation}

This framework has three interrelated limitations that motivate our proposal. First, because $S_p$ is drawn from all of $U$, overlap with $S_{np}$ is unavoidable, and duplicated observations reduce estimation efficiency, particularly when $S_{np}$ is large. Second, the MAR assumption is inherently unverifiable and can lead to severe bias when the selection mechanism depends on $y$ itself, that is, under NMAR selection. Third, the fusion weight $\alpha$ in~\eqref{eq:fusion} is rarely chosen optimally in practice. The variance-minimizing $\alpha$ depends on the variances and covariance of $\hat{Y}_{\text{GREG}}$ and $\hat{Y}_{\text{DR}}$, which must themselves be estimated; simple substitutes such as $\alpha = n_p/(n_p+n_{np})$ are commonly used but are rarely variance-minimizing, and even the variance-minimizing choice does not address the first two limitations.

\section{Proposed method}\label{sec:3}

We propose a sequential sampling framework: the probability sample is drawn after, and from the complement of, the non-probability sample.

\begin{description}
  \item[Step 1.] A non-probability sample $S_{np} \subset U$ of size $n_{np}$ is obtained through an unknown selection mechanism, and the study variable $y$ is measured for all $i \in S_{np}$. Denote the remaining units by $U_1 = U \setminus S_{np}$, with $N_1 = |U_1|$.

  \item[Step 2.] A probability sample $S_p \subset U_1$ of size $n_p$ is then drawn under a known design with first-order inclusion probabilities $\pi_i^{(p)}$, and $y$ is measured for all $i \in S_p$.
\end{description}

This partition of $U$ into the certainty stratum $S_{np}$ and the complementary stratum $U_1$ eliminates three difficulties of the independent-sampling framework simultaneously: it removes the overlap between the two data sources, removes the need to model $\pi_i^{(np)}$, and removes the fusion weight $\alpha$ in \eqref{eq:fusion}. As we show in Section~\ref{sec:5}, the resulting estimators are design-consistent without any assumption on the selection mechanism of $S_{np}$, including under NMAR selection.

A closely related design-based approach is that of \citet{kimtam2020}, who also post-stratify on non-probability sample membership but draw $S_p$ from the full population $U$, allowing overlap with $S_{np}$. Our sequential design eliminates this redundancy and additionally enables the pilot sample to inform the second-stage design (see Section~\ref{sec:4}).

Since $S_{np}$ is treated as a certainty stratum, standard stratified sampling theory applies directly. In the absence of auxiliary information, a design-consistent estimator of $Y$ is
\begin{equation}
  \hat{Y}_{\text{DI}} = \sum_{i \in S_{np}} y_i + N_1
  \frac{\sum_{i \in S_p} y_i / \pi_i^{(p)}}{\sum_{i \in S_p} 1/\pi_i^{(p)}},
  \label{eq:DI}
\end{equation}
where $S_p \subset U_1$ is any probability sample. The estimator \eqref{eq:DI} is a stratified estimator: $S_{np}$ contributes the exact total $\sum_{i \in S_{np}} y_i$ as a self-representing certainty stratum, while the second term is a H\'ajek estimator of $\sum_{i \in U_1} y_i$ based on $S_p$.

When auxiliary information $\bx_i$ is available for all $i \in U$, incorporating it through a stratified regression estimator delivers substantial efficiency gains:
\begin{equation}\label{sepDI}
  \hat{Y}_{\text{sepDI}} = \sum_{i \in S_{np}} y_i + \sum_{i \in S_p} \frac{1}{\pi_i^{(p)}} y_i + \left( \sum_{i \in U_1} \bx_i - \sum_{i \in S_p} \frac{1}{\pi_i^{(p)}} \bx_i \right)^\top \widehat{\bB}_q,
\end{equation} 
where 
\begin{equation}\label{eq:est_B_q}
\widehat{\bB}_q = \left( \sum_{i \in S_p} q_i \bx_i \bx_i^{\top} \right)^{-1} \sum_{i \in S_p} q_i \bx_i y_i
\end{equation}
estimates the stratum-$U_1$ regression coefficient under weights $q_i = q(\bx_i, \pi_i^{(p)})$, $i\in U_1$, to be specified. The corresponding population-level quantity
\begin{equation}\label{eq:B_q}
\bB_q = \left( \sum_{i \in U_1} \pi_i^{(p)} q_i \bx_i \bx_i^{\top} \right)^{-1} \sum_{i \in U_1} \pi_i^{(p)} q_i \bx_i y_i
\end{equation}
is the probability limit of $\widehat{\bB}_q$ as $n_p$ grows. Note that $\bB_q$ is a design quantity defined on the finite population $U_1$, not the superpopulation regression coefficient $\bbeta$ in \eqref{eq:superpop_model}; the two coincide only as probability limits under a correctly specified model. The estimator \eqref{sepDI} is approximately design-unbiased for a wide range of choices of $q_i$, including $q_i = 1/\pi_i^{(p)}$.

The regression coefficient can also be estimated from the combined sample $S = S_{np} \cup S_p$, yielding
\begin{equation}\label{comDI}
  \hat{Y}_{\text{comDI}} = \sum_{i \in S_{np}} y_i + \sum_{i \in S_p} \frac{1}{\pi_i^{(p)}} y_i + \left( \sum_{i \in U_1} \bx_i - \sum_{i \in S_p} \frac{1}{\pi_i^{(p)}} \bx_i \right)^\top \widehat{\bB}_{q; \text{com}},
\end{equation} 
where 
\begin{equation}\label{eq:est_B_q_com}
\widehat{\bB}_{q; \text{com}} = \left( \sum_{i \in S} q_i \bx_i \bx_i^{\top} \right)^{-1} \sum_{i \in S} q_i \bx_i y_i,
\end{equation}
based on weights $q_i = q(\bx_i, \pi_i)$ with
\begin{equation*}
    \pi_i = \begin{cases}
    1 &\text{if $i \in S_{np}$,} \\
    \pi_i^{(p)} &\text{if $i \in U_1$.} 
    \end{cases}
\end{equation*} 
The convention $\pi_i = 1$ on $S_{np}$ reflects certainty-stratum membership, not a probability statement about the unknown selection mechanism: it merely encodes the fact that every unit of $S_{np}$ enters the estimator with weight one. Equivalently, $q_i = q(\bx_i, 1)$ on $S_{np}$ and $q_i = q(\bx_i, \pi_i^{(p)})$ on $U_1$, so $q_i$ is piecewise-defined across the two strata; under the certainty-stratum convention, however, the two pieces enter \eqref{eq:est_B_q_com} on equal footing as a single weight sequence indexed by $i \in S$.

Equation \eqref{comDI} is a classical combined GREG estimator, as can be seen by writing
\begin{equation*}
  \hat{Y}_{\text{comDI}} = \sum_{i \in S} \frac{1}{\pi_i} y_i + \left( \sum_{i\in U} \bx_i - \sum_{i \in S} \frac{1}{\pi_i} \bx_i \right)^\top \widehat{\bB}_{q; \text{com}}.
\end{equation*}
The combined estimator \eqref{comDI} can be more efficient than the separate estimator \eqref{sepDI} when $S_{np}$ is large and the regression coefficients of \eqref{eq:superpop_model} are the same in both strata. This coefficient homogeneity is implied by MAR together with correct specification of the linear mean function in \eqref{eq:superpop_model}, but the two conditions are not equivalent: MAR alone need not yield a common $\bbeta$ when the linear mean is misspecified, and a common $\bbeta$ can hold under NMAR provided the model is correctly specified in both strata. Crucially, however, the combined GREG estimator \eqref{comDI} remains design-consistent regardless of whether the homogeneity assumption holds and regardless of the selection mechanism of $S_{np}$ (see Section~\ref{sec:5}). Homogeneity becomes relevant for efficiency, not for consistency.

We now turn to the second-stage design itself. Following \citet{isaki1982}, we construct inclusion probabilities $\pi_i^{(p)}$ that minimize anticipated variance under a working superpopulation model $\xi$. We assume the values $y_i$, $i \in U$, are independent draws from $\xi$ with conditional moments
\begin{equation}\label{eq:superpop_model} 
E_\xi(y_i \mid \bx_i) = \bx_i^\top \bbeta, \qquad V_\xi(y_i \mid \bx_i) = \sigma^2 v_i, 
\end{equation} 
where $v_i = v(\bx_i)$ is a known function of auxiliary variables that may involve unknown parameters. Under \eqref{eq:superpop_model}, \citet{isaki1982} show that the optimal design sets inclusion probabilities proportional to the conditional standard deviation:
\begin{equation}\label{eq:opt_pi} 
\pi_i^{(p)} \propto \{V_\xi(y_i \mid \bx_i)\}^{1/2} = \sigma v_i^{1/2}, \quad i \in U_1. 
\end{equation}

The optimal probabilities \eqref{eq:opt_pi} are not directly available, since the variance function involves unknown parameters. We use the pilot non-probability sample $S_{np}$ to estimate them; standard methods from the heteroscedastic-regression literature \citep[e.g.,][]{DC_1987} can be adapted by treating residuals from a pilot fit as responses in a variance regression. The details, including a specific working choice of $v_i$, are given in Section~\ref{sec:4}.

This pilot-based strategy relies on homogeneity of the working superpopulation model across strata: the parameters entering the mean and variance functions in \eqref{eq:superpop_model} are the same on $S_{np}$ as on $U_1$. This is stronger than the coefficient homogeneity discussed above. Under MAR, together with the correct specification of the full conditional model, such homogeneity holds, but MAR alone does not imply it, and it may also hold under NMAR when the working model is correctly specified in both strata. Even if this homogeneity fails, the proposed estimators remain design-consistent, and our inference remains design-based; what is lost is the optimality guarantee, not the validity of the inference.

\section{Estimation details}\label{sec:4}

This section specifies a working choice of variance model, shows how to estimate it from the pilot sample, and develops a diagnostic test for coefficient homogeneity to inform the choice between $\hat{Y}_{\text{sepDI}}$ and $\hat{Y}_{\text{comDI}}$.

For expositional simplicity, we adopt the power variance model
\begin{equation}\label{eq:pow_mod}
V_\xi(y_i \mid \bx_i) = \sigma^2 (\bx_i^\top \bbeta)^\gamma, \qquad \gamma > 0,
\end{equation}
which reflects the common pattern that variance increases with the mean. This specification is standard in heteroscedastic regression \citep[e.g.,][]{DC_1987} and matches patterns observed in our real-data application in Section~\ref{sec:7}. The model \eqref{eq:pow_mod} requires the linear predictor $\bx_i^\top \bbeta$ to be positive on $U_1$, a condition formalized as (L2) in Section~\ref{sec:5.3}.

\paragraph{Variance function estimation.} The variance function is estimated from $S_{np}$ in two stages. In the first stage, the ordinary least squares (OLS) estimator
\begin{equation}\label{eq:pilot_beta}
\hat{\bbeta}_{np} = \left( \sum_{i \in S_{np}} \bx_i \bx_i^\top \right)^{-1} \sum_{i \in S_{np}} \bx_i y_i
\end{equation}
yields residuals $\hat{e}_i = y_i - \bx_i^\top \hat{\bbeta}_{np}$ for $i \in S_{np}$. This estimator is consistent under mild conditions, even when heteroscedasticity is present. In the second stage, we model the squared residuals through the log-log regression
\begin{equation*}
\log(\hat{e}_i^2) = \log(\sigma^2) + \gamma \log(\hat{m}_i) + \varepsilon_i^*, \qquad \hat{m}_i = \bx_i^\top \hat{\bbeta}_{np},
\end{equation*}
with error term $\varepsilon_i^*$, and obtain $(\hat\sigma^2, \hat\gamma)$ by OLS. The estimated variance function for any unit $i \in U$ is then
\begin{equation}\label{eq:pilot_var}
\hat{\sigma}_i^2 = \hat{V}_\xi(y_i \mid \bx_i) = \hat{\sigma}^2 (\bx_i^\top \hat{\bbeta}_{np})^{\hat\gamma}.
\end{equation}
For more efficient estimates of both the regression coefficient and the error variances, $\hat{\bbeta}_{np}$ can be refined by feasible generalized least squares (FGLS):
\begin{equation}\label{eq:beta_FGLS}
\hat{\bbeta}_{np}^{\mathrm{FGLS}} = \left( \sum_{i \in S_{np}} \frac{\bx_i \bx_i^\top}{\hat{\sigma}_i^2} \right)^{-1} \sum_{i \in S_{np}} \frac{\bx_i y_i}{\hat{\sigma}_i^2}.
\end{equation}
The variance estimates $\hat{\sigma}_i^2$ are then re-computed using \eqref{eq:beta_FGLS}, and the procedure may be iterated to convergence if desired.

\paragraph{Optimal inclusion probabilities.} For the separate regression estimator $\hat{Y}_{\text{sepDI}}$ in \eqref{sepDI}, the design with estimated inclusion probabilities
\begin{equation}\label{eq:est_opt_pi0}
\hat{\pi}_i^{(p)} = n_p \frac{\hat{\sigma}_i}{\sum_{j \in U_1} \hat{\sigma}_j}, \qquad i \in U_1,
\end{equation}
where $\hat{\sigma}_i = (\hat{\sigma}_i^2)^{1/2}$ with $\hat{\sigma}_i^2$ from \eqref{eq:pilot_var}, achieves Isaki--Fuller asymptotic optimality for $\hat{Y}_{\text{sepDI}}$ under the assumptions formalized in Section~\ref{sec:5.3} (Theorem~\ref{th:1}).

\paragraph{Deciding between separate and combined estimators.} The combined estimator $\hat{Y}_{\text{comDI}}$ pools $S_{np}$ and $S_p$ in estimating the regression coefficient, which may yield efficiency gains under coefficient homogeneity but does not guarantee improvement otherwise. To inform the choice between $\hat{Y}_{\text{sepDI}}$ and $\hat{Y}_{\text{comDI}}$, we test
\begin{equation*}
H_0:\ \bbeta_{np} = \bbeta_p \qquad \text{against} \qquad H_1:\ \bbeta_{np} \neq \bbeta_p,
\end{equation*}
where $\bbeta_{np}$ and $\bbeta_p$ denote the regression coefficients on $S_{np}$ and $U_1$ respectively. The test statistic compares the FGLS estimators on the two strata.

For the FGLS estimator $\hat{\bbeta}_{np}^{\mathrm{FGLS}}$ in \eqref{eq:beta_FGLS}, the variance is estimated by the sandwich formula
\begin{equation}\label{eq:Vbeta_np}
\hat{V}(\hat{\bbeta}_{np}^{\mathrm{FGLS}}) = \left( \sum_{i \in S_{np}} \frac{\bx_i \bx_i^\top}{\hat{\sigma}_i^2} \right)^{-1} \left( \sum_{i \in S_{np}} \frac{\bx_i \bx_i^\top}{\hat{\sigma}_i^4} \hat{e}_i^2 \right) \left( \sum_{i \in S_{np}} \frac{\bx_i \bx_i^\top}{\hat{\sigma}_i^2} \right)^{-1},
\end{equation}
with $\hat{e}_i = y_i - \bx_i^\top \hat{\bbeta}_{np}^{\mathrm{FGLS}}$. This estimator is consistent under mild regularity conditions, even if the variance model is misspecified.

For stratum $U_1$, the regression coefficient $\bbeta_p$ is estimated from $S_p$ by an analogous procedure, weighted additionally by the inverse inclusion probabilities $\pi_i^{(p)}$ (e.g., $\hat\pi_i^{(p)}$ from \eqref{eq:est_opt_pi0}). Starting from
\begin{equation*}
\hat{\bbeta}_p = \left( \sum_{i \in S_p} \frac{\bx_i \bx_i^\top}{\pi_i^{(p)}} \right)^{-1} \sum_{i \in S_p} \frac{\bx_i y_i}{\pi_i^{(p)}},
\end{equation*}
the iterative FGLS procedure yields
\begin{equation}\label{eq:beta_p_FGLS}
\hat{\bbeta}_p^{\mathrm{FGLS}} = \left( \sum_{i \in S_p} \frac{\bx_i \bx_i^\top}{\pi_i^{(p)} \hat{\tau}_i^2} \right)^{-1} \sum_{i \in S_p} \frac{\bx_i y_i}{\pi_i^{(p)} \hat{\tau}_i^2},
\end{equation}
where $\hat{\tau}_i^2$ are predicted error variances from a power variance model fitted on $S_p$. Thus, for the homogeneity test, we work with the choice $q_i = 1/(\pi_i^{(p)} \hat{\tau}_i^2)$; Section~\ref{sec:5} continues to allow a general $q_i$.

The variance of $\hat{\bbeta}_p^{\mathrm{FGLS}}$ has both design and model components. Letting $\mathcal{F}_1 = \{(\bx_i, y_i) : i \in U_1\}$, the standard decomposition gives
\begin{equation*}
V(\hat{\bbeta}_p^{\mathrm{FGLS}}) = E_\xi \{V(\hat{\bbeta}_p^{\mathrm{FGLS}} \mid \mathcal{F}_1)\} + V_\xi \{E(\hat{\bbeta}_p^{\mathrm{FGLS}} \mid \mathcal{F}_1)\},
\end{equation*}
where the first term reflects sampling variability and the second captures model variability. Estimating each component separately,
\begin{equation}\label{eq:Vbeta_p}
\hat{V}(\hat{\bbeta}_p^{\mathrm{FGLS}}) = \hat{V}(\hat{\bbeta}_p^{\mathrm{FGLS}} \mid \mathcal{F}_1) + \hat\bM_\tau^{-1} \left( \sum_{i \in S_p} \frac{\bx_i \bx_i^\top}{\pi_i^{(p)} \hat{\tau}_i^4} \hat{e}_i^2 \right) \hat\bM_\tau^{-1},
\end{equation}
where $\hat{e}_i = y_i - \bx_i^\top \hat{\bbeta}_p^{\mathrm{FGLS}}$ and $\hat\bM_\tau = \sum_{i \in S_p} \bx_i \bx_i^\top / (\pi_i^{(p)} \hat{\tau}_i^2)$. The first term is the design-based variance estimator
\begin{equation*}
\hat{V}(\hat{\bbeta}_p^{\mathrm{FGLS}} \mid \mathcal{F}_1) = \hat\bM_\tau^{-1} \left[ \sum_{i \in S_p} \sum_{j \in S_p} \frac{\Delta_{ij}^{(p)}}{\pi_{ij}^{(p)}} \frac{\bx_i \hat{e}_i}{\pi_i^{(p)} \hat{\tau}_i^2} \frac{\bx_j^\top \hat{e}_j}{\pi_j^{(p)} \hat{\tau}_j^2} \right] \hat\bM_\tau^{-1},
\end{equation*}
with $\Delta_{ij}^{(p)} = \pi_{ij}^{(p)} - \pi_i^{(p)} \pi_j^{(p)}$ and $\pi_{ij}^{(p)} > 0$ the second-order inclusion probabilities. The second term in \eqref{eq:Vbeta_p} is a model-variance sandwich that is asymptotically negligible under standard conditions.

Based on \eqref{eq:beta_FGLS}, \eqref{eq:Vbeta_np}, \eqref{eq:beta_p_FGLS}, and \eqref{eq:Vbeta_p}, the test statistic is
\begin{equation}\label{eq:Ftest}
F = \left( \hat{\bbeta}_{np}^{\mathrm{FGLS}} - \hat{\bbeta}_p^{\mathrm{FGLS}} \right)^\top \left\{ \hat{V}(\hat{\bbeta}_{np}^{\mathrm{FGLS}}) + \hat{V}(\hat{\bbeta}_p^{\mathrm{FGLS}}) \right\}^{-1} \left( \hat{\bbeta}_{np}^{\mathrm{FGLS}} - \hat{\bbeta}_p^{\mathrm{FGLS}} \right).
\end{equation}
The additive form of the variance in \eqref{eq:Ftest} relies on the (asymptotic) independence of $\hat{\bbeta}_{np}^{\mathrm{FGLS}}$ and $\hat{\bbeta}_p^{\mathrm{FGLS}}$, which holds in the sequential framework: conditional on $S_{np}$, the second-stage design draws $S_p$ from $U_1 = U \setminus S_{np}$, so the two estimators are computed from disjoint subsets of the population and the design randomness in $\hat{\bbeta}_p^{\mathrm{FGLS}}$ is independent of $\hat{\bbeta}_{np}^{\mathrm{FGLS}}$.

For implementation, since the model-variance contribution to \eqref{eq:Vbeta_p} is asymptotically negligible, we use only the design-based component $\hat{V}(\hat{\bbeta}_p^{\mathrm{FGLS}} \mid \mathcal{F}_1)$ in computing \eqref{eq:Ftest}; this matches the implementation used in the simulations and applications of Sections~\ref{sec:6}--\ref{sec:7}. Under $H_0$, the resulting statistic \eqref{eq:Ftest} is asymptotically chi-square with degrees of freedom equal to $\dim(\bbeta)$. Large values of $F$ provide evidence against coefficient homogeneity and favor $\hat{Y}_{\text{sepDI}}$; if $H_0$ is not rejected, the combined estimator $\hat{Y}_{\text{comDI}}$ may be preferred for its efficiency gain.

\section{Statistical properties}\label{sec:5}

In this section, we examine the theoretical properties of the proposed estimators, organized along the two contributions described in the introduction. Section~\ref{sec:theory} establishes design consistency and consistent variance estimation for the separate and combined estimators, with no assumption on the selection mechanism of the non-probability sample. Section~\ref{sec:5.3} then establishes Isaki--Fuller asymptotic optimality of the pilot-based second-stage design for the separate estimator under a working superpopulation model that holds in both strata. The model assumptions required for the optimality result are strictly stronger than MAR, but their failure affects only the optimality guarantee, not the design-consistency and variance-estimation results of Section~\ref{sec:theory}.

Both subsections exploit the same structural feature of the sequential framework: once the non-probability sample is treated as a certainty stratum, the separate and combined estimators are classical stratified GREG estimators with known inclusion probabilities, so standard design-based arguments apply directly. The Isaki--Fuller optimality strategy, however, applies only to the separate estimator. The combined estimator includes model components derived from the fixed non-probability stratum, whose inclusion probabilities cannot be controlled or optimized; while it can be more efficient under regression coefficient homogeneity, it is not amenable to direct optimality derivations due to its fixed design structure.

\subsection{Asymptotic properties}\label{sec:theory}

This subsection establishes our first theoretical contribution: design consistency, asymptotic variance formulas, and consistent variance estimation for both the separate and combined estimators, without any assumption on the selection mechanism of the non-probability sample. The arguments are entirely design-based and exploit the certainty-stratum interpretation of $S_{np}$.

Both estimators \eqref{sepDI} and \eqref{comDI} take the form of a classical stratified GREG estimator, differing only in whether the regression coefficient for stratum $U_1$ is estimated from $S_p$ alone ($\widehat{\bB}_q$) or from the pooled sample $S$ ($\widehat{\bB}_{q;\text{com}}$). For the separate estimator, the relevant population quantity is $\bB_q$ defined in \eqref{eq:B_q}. For the combined estimator, the subsequent linearization is stated around a pseudo-target $\bB_q^\star$ satisfying $\widehat{\bB}_{q;\text{com}} - \bB_q^\star = o_p(1)$. In the homogeneous-model case covered by Proposition~\ref{prop:consistency-combined} in Appendix~\ref{app:proofs}, one may take $\bB_q^\star = \bB_q$.

For the complementary stratum, let $Y_{U_1} = \sum_{i\in U_1} y_i$ and $\bX_{U_1} = \sum_{i\in U_1} \bx_i$, and let $\hat{Y}_{\mathrm{HT},U_1} = \sum_{i\in S_p} y_i / \pi_i^{(p)}$ and $\hat{\bX}_{\mathrm{HT},U_1} = \sum_{i\in S_p} \bx_i / \pi_i^{(p)}$ denote the corresponding Horvitz--Thompson (HT) estimators.

\paragraph{Design consistency.} From \eqref{sepDI} and \eqref{comDI},
\begin{align*}
\hat{Y}_{\text{sepDI}} - Y &= \hat{Y}_{\mathrm{HT},U_1} - Y_{U_1} + (\bX_{U_1} - \hat{\bX}_{\mathrm{HT},U_1})^\top \widehat{\bB}_q,\\
\hat{Y}_{\text{comDI}} - Y &= \hat{Y}_{\mathrm{HT},U_1} - Y_{U_1} + (\bX_{U_1} - \hat{\bX}_{\mathrm{HT},U_1})^\top \widehat{\bB}_{q;\text{com}}.
\end{align*}
Hence, if $N_1^{-1} (\hat{Y}_{\mathrm{HT},U_1} - Y_{U_1}) = o_p(1)$ and $N_1^{-1} \|\hat{\bX}_{\mathrm{HT},U_1} - \bX_{U_1}\| = o_p(1)$, and the relevant regression coefficient estimator is stochastically bounded, then the estimator is design-consistent. For $\widehat{\bB}_q$, such boundedness is a classical consequence of standard GREG regularity \citep{Fuller2009}. For the combined estimator, the identity for $\hat{Y}_{\text{comDI}} - Y$ shows that stochastic boundedness of $\widehat{\bB}_{q;\text{com}}$ is sufficient for design consistency.

\paragraph{Linearization.} Together with $\|\widehat{\bB}_q - \bB_q\| = o_p(1)$, the exact expansion
\begin{equation}\label{eq:expand}
\hat{Y}_{\text{sepDI}} - Y = \hat{Y}_{\mathrm{HT},U_1} - Y_{U_1} + (\bX_{U_1} - \hat{\bX}_{\mathrm{HT},U_1})^\top \bB_q + (\bX_{U_1} - \hat{\bX}_{\mathrm{HT},U_1})^\top (\widehat{\bB}_q - \bB_q)
\end{equation}
has remainder term $o_p(n_p^{-1/2} N_1)$. The remainder rate uses
\begin{equation}\label{eq:rate}
N_1^{-1} \|\hat{\bX}_{\mathrm{HT},U_1} - \bX_{U_1}\| = O_p(n_p^{-1/2}),
\end{equation}
which holds for the second-stage design under standard regularity; in particular, \eqref{eq:rate} is implied by conditions (T1)--(T2) of Theorem~\ref{th:1} below for Poisson sampling on $U_1$ with bounded inclusion probabilities. If, similarly, $\|\widehat{\bB}_{q;\text{com}} - \bB_q^\star\| = o_p(1)$, the same conclusion holds for the combined estimator with $\bB_q^\star$ in place of $\bB_q$. The asymptotic variance formulas below therefore depend only on the leading terms; see \citet{Kim2025} for general background.

\paragraph{Asymptotic variances and their estimators.} Retaining only the leading terms in \eqref{eq:expand} and in the analogous expansion for the combined estimator, the design-based asymptotic variance of the separate estimator is
\begin{equation}\label{eq:varsep}
V(\hat{Y}_{\text{sepDI}}) = \sum_{i\in U_1} \sum_{j\in U_1} \Delta_{ij}^{(p)} \frac{y_i - \bx_i^\top \bB_q}{\pi_i^{(p)}} \frac{y_j - \bx_j^\top \bB_q}{\pi_j^{(p)}},
\end{equation}
and that of the combined estimator is
\begin{equation}\label{eq:varcom}
V(\hat{Y}_{\text{comDI}}) = \sum_{i\in U_1} \sum_{j\in U_1} \Delta_{ij}^{(p)} \frac{y_i - \bx_i^\top \bB_q^\star}{\pi_i^{(p)}} \frac{y_j - \bx_j^\top \bB_q^\star}{\pi_j^{(p)}},
\end{equation}
where $\Delta_{ij}^{(p)} = \pi_{ij}^{(p)} - \pi_i^{(p)} \pi_j^{(p)}$. When Proposition~\ref{prop:consistency-combined} holds, one may take $\bB_q^\star = \bB_q$, and \eqref{eq:varcom} reduces to \eqref{eq:varsep}; otherwise \eqref{eq:varcom} reflects the additional variance arising from a potentially misspecified pooled fit. Design-consistent plug-in estimators of \eqref{eq:varsep} and \eqref{eq:varcom} are obtained by replacing $\bB_q$ and $\bB_q^\star$ with $\widehat{\bB}_q$ and $\widehat{\bB}_{q;\text{com}}$, respectively, yielding
\begin{equation}\label{eq:varest}
\hat{V}(\hat{Y}_\bullet) = \sum_{i\in S_p} \sum_{j\in S_p} \frac{\Delta_{ij}^{(p)}}{\pi_{ij}^{(p)}} \frac{y_i - \bx_i^\top \widehat{\bB}_\bullet}{\pi_i^{(p)}} \frac{y_j - \bx_j^\top \widehat{\bB}_\bullet}{\pi_j^{(p)}},
\end{equation}
where $\widehat{\bB}_\bullet$ is $\widehat{\bB}_q$ or $\widehat{\bB}_{q;\text{com}}$ according to the estimator; consistency of \eqref{eq:varest} follows from standard arguments \citep{BO_2017}.

\subsection{Asymptotic optimality of the probability sample design}\label{sec:5.3}

This subsection establishes our second theoretical contribution: under a working superpopulation model that holds in both strata, the pilot-based second-stage design achieves Isaki--Fuller asymptotic optimality for the separate estimator. As emphasized at the start of Section~\ref{sec:5}, the required model assumptions are strictly stronger than MAR, but their failure affects only the optimality guarantee; the design-consistency and variance-estimation results of Section~\ref{sec:theory} remain intact.

We adopt the power variance model \eqref{eq:pow_mod} for concreteness. Under the assumptions of Lemma~\ref{lem:unifV}, the pilot-based plug-in variance \eqref{eq:pilot_var} is uniformly consistent on $U_1$, and Theorem~\ref{th:1} shows that the design using the estimated inclusion probabilities is asymptotically optimal in the anticipated-variance sense. Analogous conclusions hold for alternative variance functions once the corresponding uniform consistency result replaces Lemma~\ref{lem:unifV}; see Remark~\ref{rem:2}.

\begin{lemma}[Uniform consistency of the power-variance plug-in]\label{lem:unifV}
Assume, along a sequence with $n_{np}\to\infty$:
\begin{itemize}
\item[\textnormal{(L1)}] Covariates and the pilot design matrix satisfy $\sup_{i\in U}\|\bx_i\|\le C<\infty$ and
\begin{equation*}
\frac{1}{n_{np}}\sum_{i\in S_{np}}\bx_i\bx_i^\top \xrightarrow{p} \bOmega_{np},
\end{equation*}
where $\bOmega_{np}$ is positive definite.
\item[\textnormal{(L2)}] The linear predictor is positive on $U_1$: there exists $c>0$ such that $\inf_{i\in U_1}\bx_i^\top\bbeta \ge c$.
\item[\textnormal{(L3)}] The superpopulation mean model in \eqref{eq:superpop_model} and the power variance form \eqref{eq:pow_mod} hold in both strata $S_{np}$ and $U_1$; and exogeneity holds in $S_{np}$: with $e_i=y_i-\bx_i^\top\bbeta$,
\begin{equation*}
E_\xi (e_i \mid \bx_i,\ i\in S_{np})=0 .
\end{equation*}
\item[\textnormal{(L4)}] The outcomes satisfy $\sup_{i\in U}E_\xi\!\left(|y_i|^4\right)<\infty$.
\end{itemize}
Let $\hat{\bbeta}_{np}$ be the pilot estimator in \eqref{eq:pilot_beta}, and let $(\hat\gamma,\hat\sigma^2)$ be obtained by the log-log fit in Section~\ref{sec:4}, yielding $\hat V_\xi(y_i\mid\bx_i)$ in \eqref{eq:pilot_var}. Then
\begin{equation*}
\sup_{i\in U_1}\big|\,\hat V_\xi(y_i\mid\bx_i)-V_\xi(y_i\mid\bx_i)\,\big|\;\xrightarrow{p}\;0 .
\end{equation*}
\end{lemma}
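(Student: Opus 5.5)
The plan is to reduce the uniform statement on $U_1$ to consistency of the three finite-dimensional parameters $(\hat\bbeta_{np},\hat\sigma^2,\hat\gamma)$. Once these are consistent, uniform convergence follows from continuity of the map $(\bbeta,\sigma^2,\gamma)\mapsto\sigma^2(\bx^\top\bbeta)^\gamma$ on a compact set. By (L1) and Cauchy--Schwarz, $\sup_{i\in U}|\bx_i^\top\hat\bbeta_{np}-\bx_i^\top\bbeta|\le C\|\hat\bbeta_{np}-\bbeta\|$. Hence, by (L2), with probability tending to one $\bx_i^\top\hat\bbeta_{np}\ge c/2$ for all $i\in U_1$, and $\bx_i^\top\hat\bbeta_{np}\le C(\|\bbeta\|+1)$. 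On the compact interval $[c/2,\,C(\|\bbeta\|+1)]$ the function $(m,s,g)\mapsto s\,m^{g}$ is uniformly continuous, jointly in $m$ and in $(s,g)$ near $(\sigma^2,\gamma)$. Therefore
\begin{equation*}
\sup_{i\in U_1}\big|\hat\sigma^2(\bx_i^\top\hat\bbeta_{np})^{\hat\gamma}-\sigma^2(\bx_i^\top\bbeta)^{\gamma}\big|\le \omega\big(C\|\hat\bbeta_{np}-\bbeta\|+|\hat\sigma^2-\sigma^2|+|\hat\gamma-\gamma|\big),
\end{equation*}
where $\omega$ is a modulus of continuity, and the right-hand side is $o_p(1)$. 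The key point is that no sup over a growing index set ever touches random errors; only the parameter estimates enter.

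\emph{Step 1: consistency of $\hat\bbeta_{np}$.} Write $\hat\bbeta_{np}-\bbeta=(n_{np}^{-1}\sum_{S_{np}}\bx_i\bx_i^\top)^{-1}n_{np}^{-1}\sum_{S_{np}}\bx_ie_i$. By (L3) the summands have mean zero conditionally on the $\bx_i$, and they are independent under $\xi$. Their variances $\|\bx_i\|^2\sigma^2(\bx_i^\top\bbeta)^\gamma$ are bounded, using (L1) and (L4). So the second factor is $O_p(n_{np}^{-1/2})$ by Chebyshev, and (L1) makes the inverse $O_p(1)$.

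\emph{Step 2: consistency of $(\hat\sigma^2,\hat\gamma)$.} This is the main obstacle. The log-log OLS fit targets $\log\sigma^2+E\log\varepsilon^2$ rather than $\log\sigma^2$, where $\varepsilon_i=e_i/\sqrt{\sigma^2 m_i^\gamma}$ is the standardized error. The slope $\hat\gamma$ is consistent provided the $\log\varepsilon_i^2$ are i.i.d.\ (or at least have a common mean) and independent of $\bx_i$. The intercept, however, is consistent for $\log\sigma^2$ only after the usual correction for $E\log\varepsilon^2$, or equivalently after recomputing $\hat\sigma^2$ as the mean of $\hat e_i^2/\hat m_i^{\hat\gamma}$. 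I would therefore interpret $\hat\sigma^2$ this way, or state the lemma with $\sigma^2$ understood as the probability limit of the intercept, i.e.\ the scale that the log-log fit actually identifies. This requires an added assumption that the standardized errors are identically distributed, with $E|\log\varepsilon_i^2|^{2}<\infty$ to handle mass near zero.

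With that in hand, write $\log\hat e_i^2=\log e_i^2+\log(\hat e_i^2/e_i^2)$. The regressor perturbation $\log\hat m_i-\log m_i$ is uniformly $O_p(\|\hat\bbeta_{np}-\bbeta\|)$ on the units with $m_i$ bounded away from zero. I would restrict the pilot fit to $\hat m_i>0$ and assume (L2)-type positivity on $S_{np}$ as well. The delicate term is the residual perturbation $\log(\hat e_i^2/e_i^2)$, which is large when $e_i$ is near zero. I would handle it in two parts. Split on the event $|e_i|>\delta$: there the term is uniformly $O_p(\|\hat\bbeta_{np}-\bbeta\|/\delta)$. On the complement $|e_i|\le\delta$, bound the average contribution using integrability of $\log\varepsilon^2$ near zero, which makes it small uniformly as $\delta\to0$.

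With these perturbations shown to be $o_p(1)$ in average, the normal equations of the log-log OLS converge to those of the population fit. The regressor variance $n_{np}^{-1}\sum(\log m_i-\overline{\log m})^2$ must stay bounded away from zero, and I would add this as a nondegeneracy condition. Consistency of $(\hat\gamma,\widehat{\log\sigma^2})$ then follows, and the continuous mapping step above concludes the proof.
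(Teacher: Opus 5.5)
Your reduction is the same as the paper's. Consistency of $\hat\bbeta_{np}$ gives $\sup_{i\in U_1}|\hat m_i-m_i|\le C\|\hat\bbeta_{np}-\bbeta\|$. Then (L2) gives $\hat m_i\ge c/2$ on $U_1$ with probability tending to one. Finally, uniform continuity of $(s,m,g)\mapsto s\,m^{g}$ on a compact set finishes the proof once $(\hat\sigma^2,\hat\gamma)$ is consistent.

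The paper disposes of that last ingredient in one sentence (``a standard M-estimation argument''), and your objection to it is correct. The OLS intercept of $\log\hat e_i^2$ on $(1,\log\hat m_i)$ converges to $\log\sigma^2+E\log\varepsilon^2$, which is strictly below $\log\sigma^2$. Take i.i.d.\ Gaussian errors, which satisfy (L1)--(L4). Then $E\log\varepsilon^2=\psi(1/2)+\log 2\approx-1.27$, with $\psi$ the digamma function. Hence $\hat\sigma^2\to 0.28\,\sigma^2$, and $\sup_{i\in U_1}|\hat V_\xi(y_i\mid\bx_i)-V_\xi(y_i\mid\bx_i)|$ stays above roughly $0.7\,\sigma^2c^{\gamma}$. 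So the lemma as literally stated is false, and your bias-corrected or re-estimated scale is genuinely needed.

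Your other additions are also missing from (L1)--(L4):
\begin{itemize}
\item Those conditions fix only two conditional moments, so $E(\log\varepsilon_i^2\mid\bx_i)$ may vary with $\bx_i$ and bias $\hat\gamma$. For example, take Gaussian standardized errors for small $m_i$ and Rademacher ones for large $m_i$.
\item (L2) bounds $m_i$ below only on $U_1$, whereas the fit uses $\log\hat m_i$ for $i\in S_{np}$.
\item Nothing guarantees spread of $\log m_i$ over $S_{np}$.
\end{itemize}
The scale defect does no harm downstream. The probabilities $\hat\pi_i^{(p)}$ used in Theorem~\ref{th:1}, and the regression coefficients built with $q_i^{(\sigma)}$, are unchanged when $\hat V_\xi$ is multiplied by a constant. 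Only consistency of $\hat\bbeta_{np}$ and $\hat\gamma$ is actually used there.

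The step of your own sketch that does not yet go through is the small-residual term on $\{|e_i|\le\delta\}$. Integrability of $\log\varepsilon^2$ controls the $\log e_i^2$ part of $\log(\hat e_i^2/e_i^2)$. However, $\log\hat e_i^2$ is bounded only from above on that event. The residual $\hat e_i=e_i-\bx_i^\top(\hat\bbeta_{np}-\bbeta)$ can be arbitrarily close to zero even when $e_i$ is not. Log-integrability of $\varepsilon$ gives no control of $E\{|\log|\varepsilon-d||\,I(|\varepsilon|\le\delta)\}$ uniformly over small shifts $d$. It gives even less when the shift depends on $e_i$ itself through $\hat\bbeta_{np}$.

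You need an additional condition, typically a bounded density of the standardized errors near zero. That makes the expectation above $O(\delta|\log\delta|)$ uniformly in $|d|\le\delta$. Combine it with a law of large numbers uniform in the shift, or with a leave-one-out or sample-splitting device, to handle the dependence on $\hat\bbeta_{np}$. With that in place, the rest of your argument (convergence of the log-log normal equations under your nondegeneracy condition) goes through for the corrected statement.
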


\noindent\textit{Proof.} The proof is given in Appendix~\ref{app:lem-unifV}.

By Lemma~\ref{lem:unifV}, uniform consistency holds on $U_1$, which justifies replacing the infeasible optimal inclusion probabilities \eqref{eq:opt_pi} by their estimates. We now state the asymptotic optimality of the estimated design.

\begin{theorem}[Asymptotic optimality under estimated design]\label{th:1}
Assume that conditions \textnormal{(L1)}--\textnormal{(L4)} of Lemma~\ref{lem:unifV} hold. Along a sequence with $n_{np}\to\infty$, suppose:
\begin{itemize}
\item[\textnormal{(T1)}] The design matrix on $U_1$ satisfies
\begin{equation*}
\frac{1}{N_1}\sum_{i\in U_1}\bx_i\bx_i^\top \;\to\; \bM,
\end{equation*}
where $\bM$ is positive definite.
\item[\textnormal{(T2)}] The size of the probability sample $S_p$ satisfies $n_p/N_1\to f\in(0,1)$, and the inclusion probabilities on $U_1$ are uniformly bounded away from $0$ and $1$.
\end{itemize}
Let the design on $U_1$ use the estimated probabilities
\begin{equation}\label{eq:est_opt_pi}
\hat{\pi}_i^{(p)}
= n_p\,
\frac{\{\hat V_\xi(y_i\mid \bx_i)\}^{1/2}}{\sum_{j\in U_1}\{\hat V_\xi(y_j\mid \bx_j)\}^{1/2}},
\quad i\in U_1,
\end{equation}
where $\hat V_\xi(y_i\mid \bx_i)$ is the plug-in variance defined in \eqref{eq:pilot_var}. Denote by $\pi_i^{(p)}$ the optimal probabilities proportional to $\{V_\xi(y_i\mid\bx_i)\}^{1/2}$ as in \eqref{eq:opt_pi}. Then the design based on \eqref{eq:est_opt_pi} achieves asymptotic optimality for the separate GREG estimator $\hat{Y}_{\text{sepDI}}$ in \eqref{sepDI}, in the sense of minimizing the anticipated variance as in \citet{isaki1982}.
\end{theorem}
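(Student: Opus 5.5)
Throughout I take Poisson sampling on $U_1$ as the second-stage design, as the paper does when it checks \eqref{eq:rate}. The plan is to reduce everything to the anticipated variance of the leading linear term in \eqref{eq:expand}. I then show that replacing the true optimal $\pi_i^{(p)}$ by $\hat\pi_i^{(p)}$ changes this quantity only by a relative $o_p(1)$.

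First I compute the anticipated variance of the separate estimator. Under Poisson sampling, $\Delta_{ij}^{(p)}=0$ for $i\neq j$, so \eqref{eq:varsep} becomes $\sum_{i\in U_1}(1/\pi_i^{(p)}-1)(y_i-\bx_i^\top\bB_q)^2$. Conditions (T1), (T2), (L3) and (L4) give $\bB_q-\bbeta=O_p(N_1^{-1/2})$ under $\xi$. Taking $E_\xi$ and using independence of the $y_i$ then yields the anticipated variance
\begin{equation*}
AV(\pi)=\sum_{i\in U_1}\Big(\frac{1}{\pi_i}-1\Big)\sigma_i^2\,\{1+o(1)\}, \qquad \sigma_i^2=\sigma^2(\bx_i^\top\bbeta)^\gamma .
\end{equation*}
The remainder term in \eqref{eq:expand} is $o_p(n_p^{-1/2}N_1)$ by \eqref{eq:rate}, so it does not affect the first-order quantity $N_1^{-2}AV=O(n_p^{-1})$. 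By Cauchy--Schwarz, subject to $\sum_{i\in U_1}\pi_i=n_p$, the leading part $\sum_i\sigma_i^2/\pi_i$ is minimized by $\pi_i\propto\sigma_i$. Its minimum is $n_p^{-1}(\sum_{i\in U_1}\sigma_i)^2$, which is the Isaki--Fuller lower bound, attained by \eqref{eq:opt_pi}. The term $-\sum_i\sigma_i^2$ does not depend on the design.

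Second, I show that $AV(\hat\pi)/AV(\pi^{\mathrm{opt}})\to1$ in probability. Conditions (L1), (L2) and (L4) make $\sigma_i^2$ bounded above and bounded below by $\sigma^2c^\gamma>0$ on $U_1$. Lemma~\ref{lem:unifV} gives $\sup_{U_1}|\hat\sigma_i^2-\sigma_i^2|\to0$ in probability. Since $t\mapsto t^{1/2}$ is uniformly continuous away from zero, this implies $\sup_{U_1}|\hat\sigma_i/\sigma_i-1|\to0$. Hence $\sum_{j\in U_1}\hat\sigma_j/\sum_{j\in U_1}\sigma_j\to1$, and so $\sup_{i\in U_1}|\hat\pi_i^{(p)}/\pi_i^{(p)}-1|\to0$. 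Substituting into $\sum_i\sigma_i^2/\hat\pi_i$ bounds the relative error by this supremum. I also need the estimated design to inherit the bounds of (T2): this follows because $\hat\pi_i/\pi_i\to1$ uniformly, so the linearization of Section~\ref{sec:theory} applies under the estimated design as well.

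The main obstacle is the dependence between $S_{np}$, which is used to construct $\hat\pi$, and the outcomes on $U_1$, whose anticipated variance is being evaluated. I will handle it by conditioning on $S_{np}$ and on $\{\bx_i\}$. Under (L3), the model on $U_1$ holds with the same $(\bbeta,\sigma^2,\gamma)$. The values $y_i$, $i\in U_1$, are independent of the pilot outcomes, so given $S_{np}$ the $\hat\pi_i$ are fixed constants with respect to the $\xi$-expectation over $U_1$. The computation in the first step therefore applies conditionally, and the uniform convergence of the ratio carries over to the unconditional statement.

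A secondary technical point is justifying the interchange between the design-based remainder and $E_\xi$. I will state optimality in probability, in the ratio sense of \citet{isaki1982}, rather than as convergence of expectations, which avoids needing uniform integrability of the remainder.
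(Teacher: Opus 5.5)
Your proposal is correct and follows the same route as the paper's proof: Lemma~\ref{lem:unifV} gives uniform convergence of $\hat\pi_i^{(p)}$ to the Isaki--Fuller probabilities on $U_1$, and continuity of the anticipated-variance functional (probabilities bounded away from zero under (T2)) carries the minimum over to the estimated design. Your additions make explicit steps that the paper only asserts, without changing the argument: the Poisson form $\sum_{i\in U_1}(1/\pi_i-1)\sigma_i^2$ with its Cauchy--Schwarz minimization, the uniform ratio $\hat\pi_i^{(p)}/\pi_i^{(p)}\to 1$, and conditioning on $S_{np}$ so that the estimated probabilities are fixed under the $\xi$-expectation over $U_1$.
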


\noindent\textit{Proof.} The proof is given in Appendix~\ref{app:thm-opt}.

\begin{remark}\label{rem:2}
The asymptotic optimality result in Theorem~\ref{th:1} is stated under the power variance function \eqref{eq:pow_mod}. For an alternative parametric variance model (e.g., an exponential form), replace Lemma~\ref{lem:unifV} with a result asserting uniform consistency on $U_1$ of the variance plug-in fitted on $S_{np}$, and then apply the same continuity argument used in Theorem~\ref{th:1}. The regularity conditions should be adapted to the chosen variance model, typically including moment bounds (cf.\ (L4)) and, where appropriate, a condition analogous to (L2), while the overall proof strategy remains unchanged.
\end{remark}

\section{Simulation study}\label{sec:6}

We conduct a simulation study to (i) assess the design-based validity of the proposed sequential estimators in terms of bias, variance estimation, and confidence interval coverage, and (ii) illustrate how methods relying on MAR-type propensity score adjustments can break down when the non-probability mechanism is NMAR. We consider two mechanisms for generating the realized non-probability sample $S_{np}$: a MAR mechanism that depends only on auxiliary variables, and an NMAR mechanism that depends on the study variable in addition to the auxiliaries.

\subsection{Simulation setup}\label{sec:6:setup}

We generate a fixed finite population $U$ of size $N=10\,000$ once and treat it as fixed across Monte Carlo repetitions. For each unit $i\in U$, generate independent covariates $x_{1i}, x_{2i} \sim \mathcal{U}(0,1)$ and define the mean function with an interaction term
\begin{equation*}
\mu_i=\beta_0+\beta_1 x_{1i}+\beta_2 x_{2i}+\beta_3 x_{1i}x_{2i}.
\end{equation*}
The outcome is generated from a scaled lognormal model
\begin{equation*}
y_i=\mu_i\exp(\varepsilon_i), \quad \varepsilon_i\sim \mathcal{N}(-\sigma^2/2,\sigma^2),
\end{equation*}
so that $E_\xi(y_i\mid \bx_i)=\mu_i$ and $V_\xi(y_i\mid \bx_i)=\mu_i^2\{\exp(\sigma^2)-1\}$. Throughout, the regression-type estimators use the working covariate vector $\bx_i=(1,x_{1i},x_{2i})^\top$, deliberately omitting the interaction term in $\mu_i$. We set $(\beta_0,\beta_1,\beta_2,\beta_3)=(10,15,10,20)$ and $\sigma=0.6$.

Let $\delta_i\in\{0,1\}$ indicate membership in the non-probability sample $S_{np}=\{i:\delta_i=1\}$, and denote the remaining units by $U_1=U\setminus S_{np}$ with $N_1=|U_1|$. Conditional on the generated population, we draw the $\delta_i$ independently with selection probabilities $p_i$ given by one of the following models:

\smallskip
\noindent\emph{(i) MAR mechanism.} $\mathrm{logit}(p_i)=\alpha_0+\alpha_1 x_{1i}+\alpha_2 x_{2i}$, with $(\alpha_1,\alpha_2)=(2,-2)$.

\smallskip
\noindent\emph{(ii) NMAR mechanism.} $\mathrm{logit}(p_i)=\alpha_0+\alpha_1 x_{1i}+\alpha_2 x_{2i}+\alpha_3 \log(1+y_i)$, with $(\alpha_1,\alpha_2,\alpha_3)=(2,-2,0.5)$.

\smallskip
\noindent In each case, the intercept $\alpha_0$ is calibrated so that $N^{-1}\sum_{i\in U}p_i = f_{np}$, where $f_{np}$ is a target non-probability sampling rate. We set $f_{np}=0.70$, reflecting a regime in which the non-probability source is much larger than the probability sample to be drawn from its complement; this is the regime where data integration is most consequential and where the efficiency benefit of treating $S_{np}$ as a certainty stratum is most pronounced. Smaller values of $f_{np}$ are not expected to alter the qualitative conclusions, since for any $f_{np}\in(0,1)$ the certainty-stratum interpretation continues to drive the design-consistency of the sequential estimators.

Conditional on the realized $S_{np}$, we draw a second-stage Poisson sample $S_p\subset U_1$ with first-order inclusion probabilities $\pi_i^{(p)}$ and expected size $n_p=\lfloor f_p N_1\rfloor$, with $f_p=0.40$. The inclusion probabilities on $U_1$ are constructed from the pilot variance model of Section~\ref{sec:4}: we fit the power variance model \eqref{eq:pow_mod} on $S_{np}$ to obtain $\hat{\sigma}_i^2$ for all units, then set $\pi_i^{(p)}\propto \hat{\sigma}_i$ on $U_1$ and rescale to satisfy $\sum_{i\in U_1}\pi_i^{(p)}=n_p$ (with truncation at $1$ when necessary). The pilot variance fit uses a single FGLS update of $\hat{\bbeta}_{np}$ following the OLS pilot estimator \eqref{eq:pilot_beta}, without further iteration.

For design comparisons (used only to illustrate the benefit of the estimated optimal allocation), we also consider an equal-probability Poisson design with $\pi_i^{(p)}\equiv n_p/N_1$ and a Poisson probability proportional to size (PPS) design with $\pi_i^{(p)}\propto x_{1i}$.

For comparison with estimators developed under the independent-sampling framework of Section~\ref{sec:2}, we additionally draw an independent Poisson probability sample $S_p^{\mathrm{ind}}\subset U$ with equal first-order inclusion probabilities $\pi_i^{(\mathrm{ind})}\equiv n_p^{\mathrm{ind}}/N$ and expected size $n_p^{\mathrm{ind}}=\lfloor f_p(1-f_{np})N\rfloor$. This independent sample is generated without excluding the realized $S_{np}$ and may therefore overlap with it.

\subsection{Estimators compared}\label{sec:6:estimators}

We compare two groups of estimators of the population total $Y=\sum_{i\in U} y_i$.

\paragraph{Sequential (design-based) estimators.}
Using $(S_{np},S_p)$, we compute the data integration (DI) estimator $\hat{Y}_{\text{DI}}$ in \eqref{eq:DI}, the stratified HT estimator
\begin{equation}\label{eq:HT_seq}
\hat{Y}_{\text{HT}}^{\text{seq}}=\sum_{i\in S_{np}}y_i+\sum_{i\in S_p}\frac{1}{\pi_i^{(p)}}y_i,
\end{equation}
and regression-type estimators in \eqref{sepDI} and \eqref{comDI} under two specifications of the working weight,
\begin{equation*}
q_i^{(b)}=\frac{1}{\pi_i^{(p)}} \quad \text{and} \quad q_i^{(\sigma)}=\frac{1}{\pi_i^{(p)}\hat{\sigma}_i^2}.
\end{equation*}
We report $\hat{Y}_{\text{sepDI}}(q^{(b)})$, $\hat{Y}_{\text{sepDI}}(q^{(\sigma)})$, and $\hat{Y}_{\text{comDI}}(q^{(\sigma)})$. To guide the choice between separate and combined regression, we use the coefficient homogeneity test of Section~\ref{sec:4}, with $\mathcal{R}$ denoting the rejection event. The adaptive estimator is
\begin{equation}\label{eq:addi_rule}
\hat{Y}_{\text{adDI}}(q^{(\sigma)}) =
\begin{cases}
\hat{Y}_{\text{sepDI}}(q^{(\sigma)}) & \text{if }\mathcal{R}\text{ occurs,}\\
\hat{Y}_{\text{comDI}}(q^{(\sigma)}) & \text{otherwise.}
\end{cases}
\end{equation}

\paragraph{Numerical safeguards.} The pilot variance fit and the resulting weights $q_i^{(\sigma)}$ are sensitive to extreme values of the fitted variance, so we apply the following safeguards in all sequential estimators: (i) cap $|\hat{\gamma}|\le 3$; (ii) floor pilot mean predictions at the empirical $5\%$ quantile of positive values in $S_{np}$; (iii) bound $\hat{\sigma}_i^2$ away from zero; (iv) enforce $\pi_i^{(p)}\ge 0.01$ in the estimated optimal Poisson design; and (v) truncate the working weights $q_i^{(\sigma)}$ at the empirical $99.9$th percentile within the corresponding estimating sample. These safeguards prevent rare numerical instabilities arising from extreme pilot residuals or near-zero predicted variances; in the simulations, they are activated in only a small fraction of replications, and the qualitative conclusions are not driven by them.

\paragraph{Independent-sampling estimators.}
Using the independent sample $S_p^{\mathrm{ind}}$, we compute the GREG estimator $\hat{Y}_{\text{GREG}}$ from Section~\ref{sec:2}, with first-order inclusion probabilities $\pi_i^{(\mathrm{ind})}$ and working covariate vector $\bx_i=(1,x_{1i},x_{2i})^\top$. Using $S_{np}$, we compute the propensity-based estimators $\hat{Y}_{\text{IPW}}$ and $\hat{Y}_{\text{DR}}$ from Section~\ref{sec:2} under a working MAR propensity score model with the same covariate vector. Finally, we compute the fusion estimator $\hat{Y}_{\text{GREG-DR},\alpha}$ in \eqref{eq:fusion} with the simple size-based rule $\alpha=|S_p^{\mathrm{ind}}|/(|S_p^{\mathrm{ind}}|+|S_{np}|)$. This rule is widely used in practice but is not variance-minimizing; the results below should therefore be read as a comparison against a typical implementation of the fusion estimator rather than against its theoretically optimal version.

In the NMAR setup, the propensity score model is misspecified by construction, which is intended to illustrate the sensitivity of $\hat{Y}_{\text{IPW}}$, $\hat{Y}_{\text{DR}}$, and $\hat{Y}_{\text{GREG-DR},\alpha}$ to violations of the MAR assumption.

\subsection{Performance measures}\label{sec:6:perf}

Let $\hat{Y}^{(r)}$ denote an estimator computed in Monte Carlo replication $r=1,\dots,R$. We use $R=100\,000$. Since the population is fully generated, the true total $Y=\sum_{i\in U}y_i$ is known. We report the relative bias (RB) and the relative root mean squared error (RRMSE) in percent,
\begin{equation*}
\mathrm{RB}(\hat{Y})=100\times\frac{\bar{\hat{Y}}-Y}{Y} \quad \text{and} \quad \mathrm{RRMSE}(\hat{Y})=100\times\frac{\left\{R^{-1}\sum_{r=1}^R\big(\hat{Y}^{(r)}-Y\big)^2\right\}^{1/2}}{Y},
\end{equation*}
where $\bar{\hat{Y}}=R^{-1}\sum_{r=1}^R\hat{Y}^{(r)}$.
For estimators equipped with a design-based variance estimator $\hat{V}^{(r)}$ (the sequential estimators and the GREG estimator $\hat{Y}_{\text{GREG}}$), we report the ratio $\bar{V}/V_{\mathrm{MC}}(\hat{Y})$, where
\begin{equation*}
\bar{V}=R^{-1}\sum_{r=1}^R\hat{V}^{(r)} \quad \text{and} \quad V_{\mathrm{MC}}(\hat{Y})=(R-1)^{-1}\sum_{r=1}^R\big(\hat{Y}^{(r)}-\bar{\hat{Y}}\big)^2.
\end{equation*}
We also report the coverage rate of nominal $95\%$ Wald intervals $\hat{Y}^{(r)}\pm z_{0.975}\{\hat{V}^{(r)}\}^{1/2}$, where $z_{0.975}$ denotes the $97.5$th quantile of the standard normal distribution. For the propensity-based estimators and the fusion estimator, we focus on RB and RRMSE and do not include variance estimation in the comparisons.

\subsection{Main results under the MAR mechanism}\label{sec:6:mar}

Table~\ref{tab:sim_mar} summarizes Monte Carlo performance measures under the MAR mechanism. Most estimators are essentially unbiased, with the main exception of the PPS version of $\hat{Y}_{\text{sepDI}}(q^{(\sigma)})$. Within the sequential class under the estimated optimal design, $\hat{Y}_{\text{sepDI}}(q^{(b)})$ and $\hat{Y}_{\text{comDI}}(q^{(\sigma)})$ attain the smallest RRMSE, while $\hat{Y}_{\text{sepDI}}(q^{(\sigma)})$ and $\hat{Y}_{\text{adDI}}(q^{(\sigma)})$ are only slightly less efficient. The GREG estimator is substantially less efficient than the sequential estimators. The IPW, DR, and GREG-DR estimators also perform well under MAR, as expected, but none improve on the best sequential regression estimators.

For $\hat{Y}_{\text{sepDI}}(q^{(\sigma)})$, the equal-Poisson design incurs only a modest loss of efficiency relative to the estimated optimal design, whereas the PPS design is unstable. Because the boxplots in Figure~\ref{fig:sim_mar} suppress outlying replications for readability, the PPS boxplot mainly reflects the central mass of the error distribution; the inflated RRMSE in Table~\ref{tab:sim_mar} is driven by a small number of extreme realizations.

Coverage of the nominal $95\%$ Wald intervals for the sequential estimators is between $0.947$ and $0.950$, and the variance-estimator ratios $\bar{V}/V_{\mathrm{MC}}$ are close to one, indicating well-calibrated variance estimation. The slight downward deviation from $0.950$ is within Monte Carlo noise at $R = 100\,000$.

\begin{table}[!p]
\centering
\small
\caption{Monte Carlo performance measures under the MAR mechanism. The Design column indicates the second-stage Poisson design on $U_1$ for the sequential estimators: ``optimal'' uses $\pi_i^{(p)}\propto \hat{\sigma}_i$, ``equal'' uses $\pi_i^{(p)}\equiv n_p/N_1$, and ``PPS'' uses $\pi_i^{(p)}\propto x_{1i}$. RB and RRMSE are expressed as percentages of $Y$.}
\label{tab:sim_mar}
\begin{tabular}{llrrrr}
\hline
Estimator & Design & RB (\%) & RRMSE (\%) & $\bar{V}/V_{\mathrm{MC}}$ & Coverage \\
\hline
$\hat{Y}_{\text{DI}}$ & optimal & 0.004 & 0.477 & 0.996 & 0.950 \\
$\hat{Y}_{\text{HT}}^{\text{seq}}$ & optimal & 0.002 & 0.754 & 1.000 & 0.949 \\
$\hat{Y}_{\text{sepDI}}(q^{(b)})$ & optimal & 0.000 & 0.418 & 0.990 & 0.948 \\
$\hat{Y}_{\text{sepDI}}(q^{(\sigma)})$ & optimal & 0.000 & 0.437 & 0.991 & 0.948 \\
$\hat{Y}_{\text{sepDI}}(q^{(\sigma)})$ & equal & 0.002 & 0.451 & 0.988 & 0.947 \\
$\hat{Y}_{\text{sepDI}}(q^{(\sigma)})$ & PPS & -0.234 & 109.239 & 0.575 & 0.925 \\
$\hat{Y}_{\text{comDI}}(q^{(\sigma)})$ & optimal & 0.001 & 0.417 & 0.996 & 0.949 \\
$\hat{Y}_{\text{adDI}}(q^{(\sigma)})$ & optimal & 0.000 & 0.435 & 0.992 & 0.948 \\
$\hat{Y}_{\text{GREG}}$ &  & -0.005 & 1.863 & 0.995 & 0.948 \\
$\hat{Y}_{\text{IPW}}$ &  & 0.000 & 0.467 &  &  \\
$\hat{Y}_{\text{DR}}$ &  & 0.000 & 0.465 &  &  \\
$\hat{Y}_{\text{GREG-DR},\alpha}$ &  & -0.000 & 0.482 &  &  \\
\hline
\end{tabular}
\end{table}

\begin{figure}[!p]
\centering
\includegraphics[width=0.95\textwidth]{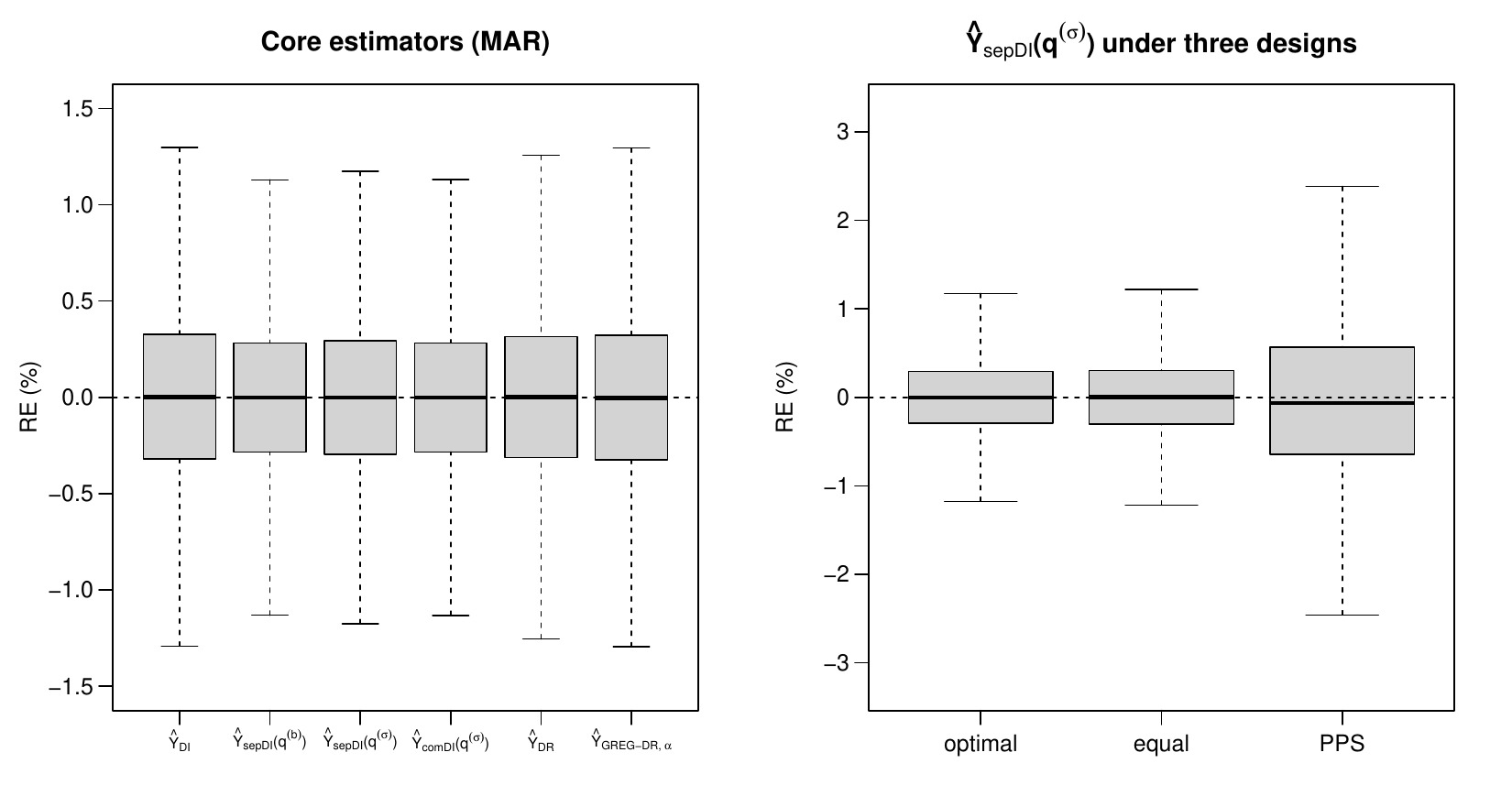}
\caption{Boxplots of relative errors $\mathrm{RE}=100(\hat{Y}-Y)/Y$ (in percent) across Monte Carlo replications under the MAR mechanism. Outlier points are suppressed. For readability, the vertical axis limits are truncated at the $0.999$ quantile of $|\mathrm{RE}|$ in the left panel and at the $0.99$ quantile in the right panel (computed within each panel), so replications outside the displayed range are not shown.}
\label{fig:sim_mar}
\end{figure}

The left panel of Figure~\ref{fig:sim_mar} compares selected estimators: the sequential estimators $\hat{Y}_{\text{DI}}$, $\hat{Y}_{\text{sepDI}}(q^{(b)})$, $\hat{Y}_{\text{sepDI}}(q^{(\sigma)})$, and $\hat{Y}_{\text{comDI}}(q^{(\sigma)})$ are shown under the estimated optimal design, while $\hat{Y}_{\text{DR}}$ and $\hat{Y}_{\text{GREG-DR},\alpha}$ are included as benchmark competitors. The right panel highlights the effect of the second-stage design on $\hat{Y}_{\text{sepDI}}(q^{(\sigma)})$. The display confirms that the selected estimators in the left panel are centered close to zero and that the equal-Poisson design remains reasonably competitive, while PPS exhibits a much wider spread.
\subsection{Main results under the NMAR mechanism}\label{sec:6:nmar}

Table~\ref{tab:sim_nmar} reports the corresponding results under the NMAR mechanism. The sequential estimators remain essentially unbiased, with variance estimates close to the empirical Monte Carlo variance and coverage close to the nominal $95\%$ level. Within the estimated optimal design, $\hat{Y}_{\text{sepDI}}(q^{(b)})$ attains the smallest RRMSE, closely followed by $\hat{Y}_{\text{comDI}}(q^{(\sigma)})$ and $\hat{Y}_{\text{sepDI}}(q^{(\sigma)})$. The GREG estimator remains essentially unbiased but is substantially less efficient than the sequential estimators.

In contrast, the IPW and DR estimators become severely biased when the propensity model is misspecified under NMAR, with relative biases exceeding $4.5\%$. The GREG-DR fusion estimator, although less biased than IPW and DR, remains substantially less accurate than the sequential estimators. As under MAR, the equal-Poisson design yields only a modest efficiency loss relative to the estimated optimal design, whereas PPS remains clearly inefficient.

\begin{table}[!ht]
\centering
\small
\caption{Monte Carlo performance measures under the NMAR mechanism. The Design column indicates the second-stage Poisson design on $U_1$ for the sequential estimators: ``optimal'' uses $\pi_i^{(p)}\propto \hat{\sigma}_i$, ``equal'' uses $\pi_i^{(p)}\equiv n_p/N_1$, and ``PPS'' uses $\pi_i^{(p)}\propto x_{1i}$. RB and RRMSE are expressed as percentages of $Y$.}
\label{tab:sim_nmar}
\begin{tabular}{llrrrr}
\hline
Estimator & Design & RB (\%) & RRMSE (\%) & $\bar{V}/V_{\mathrm{MC}}$ & Coverage \\
\hline
$\hat{Y}_{\text{DI}}$ & optimal & 0.001 & 0.406 & 1.003 & 0.950 \\
$\hat{Y}_{\text{HT}}^{\text{seq}}$ & optimal & -0.004 & 0.649 & 1.007 & 0.950 \\
$\hat{Y}_{\text{sepDI}}(q^{(b)})$ & optimal & -0.002 & 0.356 & 0.998 & 0.949 \\
$\hat{Y}_{\text{sepDI}}(q^{(\sigma)})$ & optimal & -0.002 & 0.369 & 0.998 & 0.950 \\
$\hat{Y}_{\text{sepDI}}(q^{(\sigma)})$ & equal & -0.001 & 0.386 & 0.991 & 0.947 \\
$\hat{Y}_{\text{sepDI}}(q^{(\sigma)})$ & PPS & -0.008 & 4.844 & 0.865 & 0.928 \\
$\hat{Y}_{\text{comDI}}(q^{(\sigma)})$ & optimal & 0.000 & 0.365 & 1.003 & 0.951 \\
$\hat{Y}_{\text{adDI}}(q^{(\sigma)})$ & optimal & -0.002 & 0.369 & 0.997 & 0.950 \\
$\hat{Y}_{\text{GREG}}$ &  & -0.010 & 1.861 & 0.996 & 0.947 \\
$\hat{Y}_{\text{IPW}}$ &  & 4.566 & 4.587 &  &  \\
$\hat{Y}_{\text{DR}}$ &  & 4.580 & 4.600 &  &  \\
$\hat{Y}_{\text{GREG-DR},\alpha}$ &  & 3.908 & 3.935 &  &  \\
\hline
\end{tabular}
\end{table}

\begin{figure}[!ht]
\centering
\includegraphics[width=0.95\textwidth]{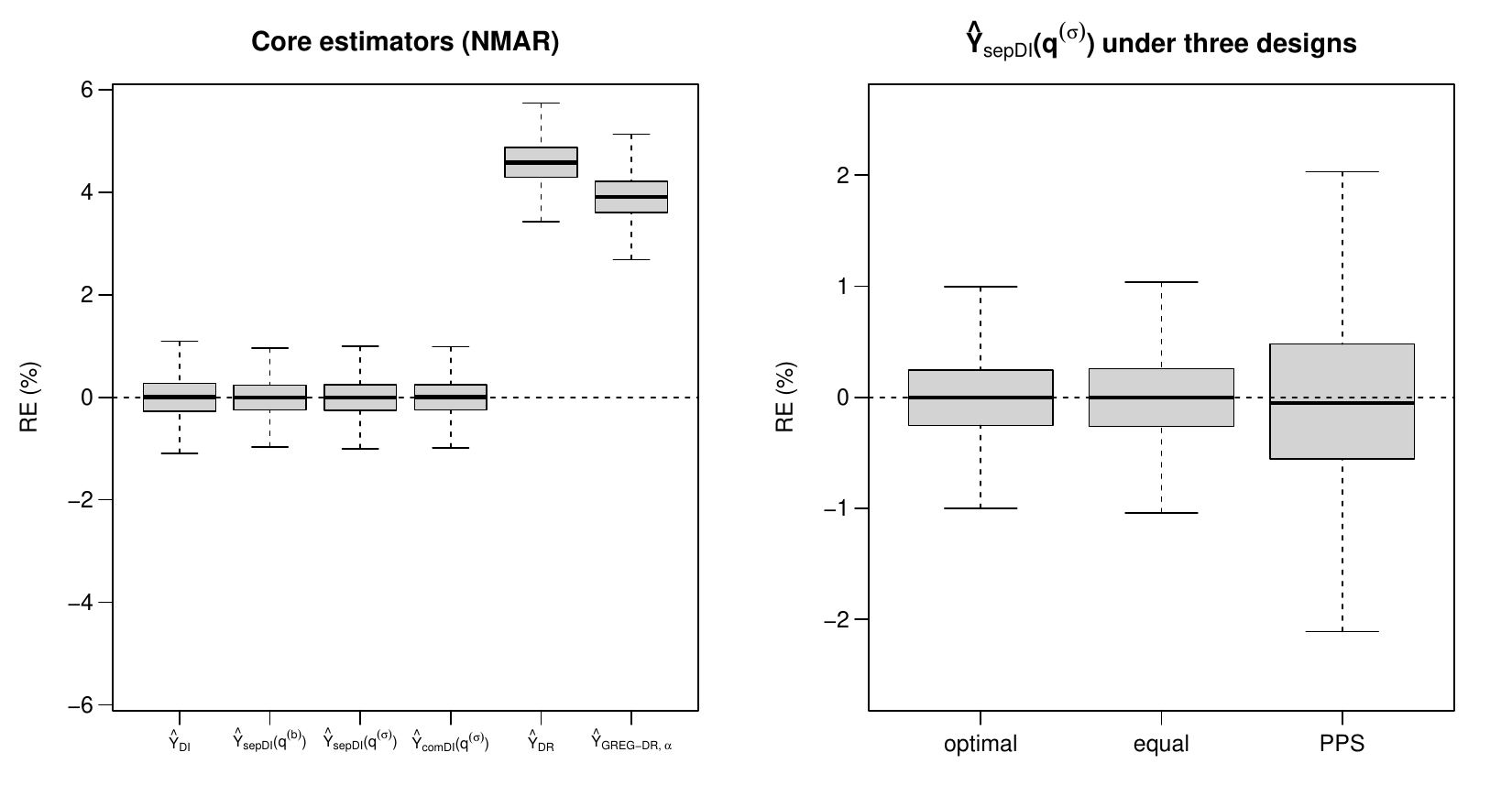}
\caption{Boxplots of relative errors $\mathrm{RE}=100(\hat{Y}-Y)/Y$ (in percent) across Monte Carlo replications under the NMAR mechanism. Outlier points are suppressed. For readability, the vertical axis limits are truncated at the $0.999$ quantile of $|\mathrm{RE}|$ in the left panel and at the $0.99$ quantile in the right panel (computed within each panel), so replications outside the displayed range are not shown.}
\label{fig:sim_nmar}
\end{figure}

In the left panel of Figure~\ref{fig:sim_nmar}, the sequential estimators remain concentrated around zero, while the DR estimator and the GREG-DR fusion estimator are visibly shifted because of the NMAR bias. The right panel again shows that the equal-Poisson design stays close to the estimated optimal design for $\hat{Y}_{\text{sepDI}}(q^{(\sigma)})$, whereas PPS leads to markedly larger dispersion.

\subsection{Homogeneity test and adaptive choice}\label{sec:6:test}

Table~\ref{tab:sim_test} summarizes the coefficient homogeneity test at the $\alpha=0.05$ level under both mechanisms. The rejection rate is $0.842$ under MAR and $0.967$ under NMAR.

The rejection rate of $0.842$ under MAR may at first appear surprising and warrants discussion. With $N=10\,000$, $n_{np}\approx 7\,000$, and $n_p\approx 1\,200$, the sample sizes used to estimate $\bbeta_{np}$ and $\bbeta_p$ are large enough that the test detects even very small differences between the two coefficient vectors. In finite populations, exact equality $\bbeta_{np}=\bbeta_p$ is essentially never the truth, even when the underlying mechanism is MAR; what holds under MAR together with correct specification of the mean model is asymptotic equality, and at moderate-to-large sample sizes the test will reject substantively negligible differences.

Two practical implications follow. First, the homogeneity test should be read as a sensitivity diagnostic rather than as a binary rule for choosing between $\hat{Y}_{\text{sepDI}}$ and $\hat{Y}_{\text{comDI}}$. The test answers ``is there evidence of any coefficient heterogeneity?'', whereas the relevant practical question is ``is the heterogeneity large enough to outweigh the variance reduction from pooling?''. Second, the MSE comparison between $\hat{Y}_{\text{sepDI}}$ and $\hat{Y}_{\text{comDI}}$ depends not only on the size of the coefficient differences but also on how different the auxiliary totals are between $S_{np}$ and $U_1$: when the corresponding $\bx$-totals are similar, even nontrivial coefficient differences contribute little to the MSE of the combined estimator.

This is exactly the pattern observed in Tables~\ref{tab:sim_mar} and~\ref{tab:sim_nmar}: despite frequent rejection, $\hat{Y}_{\text{comDI}}(q^{(\sigma)})$ is slightly more efficient than $\hat{Y}_{\text{sepDI}}(q^{(\sigma)})$ under MAR, and only marginally less efficient under NMAR.

Consequently, the adaptive estimator $\hat{Y}_{\text{adDI}}(q^{(\sigma)})$ defined in \eqref{eq:addi_rule} is nearly indistinguishable from $\hat{Y}_{\text{sepDI}}(q^{(\sigma)})$ in both tables, since the test rejects in the vast majority of replications. The adaptive rule is conservative by design: it defaults to the safer separate estimator whenever any heterogeneity is detected. A more refined diagnostic---one that compares an estimated MSE difference against a tolerance scaled by the auxiliary-total contrast---could, in principle, recover the modest efficiency gain offered by the combined estimator under MAR, but we leave this to future work.

\begin{table}[!ht]
\centering
\caption{Summary of the coefficient homogeneity test across Monte Carlo replications under the MAR and NMAR mechanisms.}
\label{tab:sim_test}
\begin{tabular}{llrrrr}
\hline
Mechanism & $R$ & $\alpha$ & Reject rate & Mean($p$-value) & Median($p$-value) \\
\hline
MAR  & 100\,000 & 0.05 & 0.84177 & 0.055346 & 0.000001 \\
NMAR & 100\,000 & 0.05 & 0.96679 & 0.012085 & 0.000000 \\
\hline
\end{tabular}
\end{table}

\section{Real-data applications}\label{sec:7}

This section presents real-data Monte Carlo experiments based on data provided by the State Data Agency (Statistics Lithuania). Two applications are considered: turnover for service enterprises (Section~\ref{sec:7.1}) and drug sales for pharmacies (Section~\ref{sec:7.2}). Both follow the sequential framework of Section~\ref{sec:3}: an administrative source or a voluntary sample defines the realized $S_{np}$, and the second-stage probability sample is drawn from $U_1=U\setminus S_{np}$.

The two applications differ in the degree of heterogeneity between $S_{np}$ and $U_1$, providing complementary stress tests of the proposed framework. The turnover application features strong heterogeneity: an administrative source over-represents large enterprises relative to the population. The pharmacy application features weak heterogeneity: a voluntary sample of large pharmacy chains overlaps substantially with the rest of the population on the relevant scale. As anticipated by the discussion in Section~\ref{sec:6:test}, the separate regression estimator is preferable in the heterogeneous setting, while the combined estimator offers a small efficiency gain in the more homogeneous one.

\subsection*{Framework for Monte Carlo experiments}

In each real-data experiment, we condition on the realized $S_{np}$ and repeatedly draw only the second-stage Poisson sample $S_p\subset U_1$, with $R=100\,000$ Monte Carlo replications. The first-order inclusion probabilities $\pi_i^{(p)}$ are computed from the pilot variance model of Section~\ref{sec:4} via \eqref{eq:est_opt_pi0}, and the fitted variance proxy $\hat{\sigma}_i^2$ from the pilot model is used both in $\pi_i^{(p)}$ and in the variance-weighted specification $q_i^{(\sigma)}$. The reported performance measures therefore describe design-based repeated-sampling behavior with respect to the second-stage design, conditional on the observed $S_{np}$.

Under this conditional setup, estimators that depend only on $S_{np}$---such as the IPW and DR estimators of Section~\ref{sec:2}---do not exhibit Monte Carlo variability and are not included in the comparison. We focus on the sequential design-based estimators introduced in Section~\ref{sec:6:estimators}: $\hat{Y}_{\text{DI}}$, the sequential HT estimator $\hat{Y}_{\text{HT}}^{\text{seq}}$, $\hat{Y}_{\text{sepDI}}(q^{(b)})$, $\hat{Y}_{\text{sepDI}}(q^{(\sigma)})$, $\hat{Y}_{\text{comDI}}(q^{(\sigma)})$, and the adaptive estimator $\hat{Y}_{\text{adDI}}(q^{(\sigma)})$ defined in \eqref{eq:addi_rule}.

\subsection{Turnover data for service enterprises}\label{sec:7.1}

\subsubsection{Population, variables, and sampling}

We consider a finite pseudo-real population of $N=2\,839$ Lithuanian service enterprises. The study variable $y$ is the monthly turnover. Auxiliary information is available for all units from administrative registers; in this application, we use a single covariate $x_1$ (annual turnover) and set $\bx_i=(1,x_{1i})^\top$.

The realized non-probability sample $S_{np}$ is obtained from administrative value-added tax (VAT) records and contains $n_{np}=2\,232$ units. Thus, the complementary stratum has size $N_1=607$. Because small enterprises are often not VAT payers, the administrative source and its complement are markedly heterogeneous, which makes this application a useful stress test of the sequential design-based strategy.

In the second stage, we set the expected Poisson sample size to $n_p=242$ (about $40\%$ of $U_1$). The inclusion probabilities on $U_1$ are proportional to $\hat{\sigma}_i$ and scaled to satisfy $\sum_{i\in U_1}\pi_i^{(p)}=n_p$.

\subsubsection[Diagnostics: heterogeneity between Snp and U1]{Diagnostics: heterogeneity between $S_{np}$ and $U_1$}

Figure~\ref{fig:vat_heterogeneity} illustrates the lack of representativeness of the administrative pilot by comparing the outcome distribution and the $y$--$x_1$ relationship across $S_{np}$ and $U_1$. For readability, we apply the transformation $t\mapsto \log(1+t)$, which reduces the impact of heavy tails commonly encountered in turnover data. The density plot shows strong separation between the two groups, and the scatter plot of $\log(1+y)$ versus $\log(1+x_1)$ highlights that $S_{np}$ over-represents large units relative to $U_1$. Such heterogeneity is precisely the setting in which approaches relying on MAR and common support assumptions can be fragile, motivating the sequential design-based strategy adopted here.

\begin{figure}[!ht]
\centering
\begin{subfigure}{0.48\textwidth}
\centering
\includegraphics[width=\textwidth]{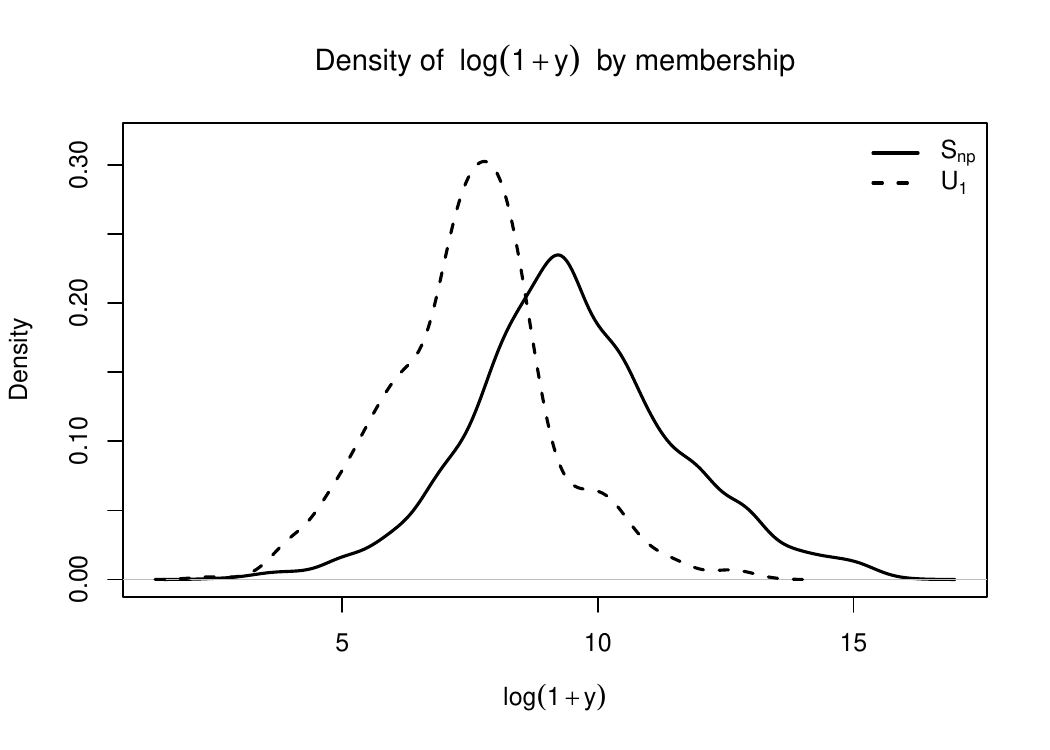}
\caption{Density of $\log(1+y)$ by sample membership.}
\end{subfigure}\hfill
\begin{subfigure}{0.48\textwidth}
\centering
\includegraphics[width=\textwidth]{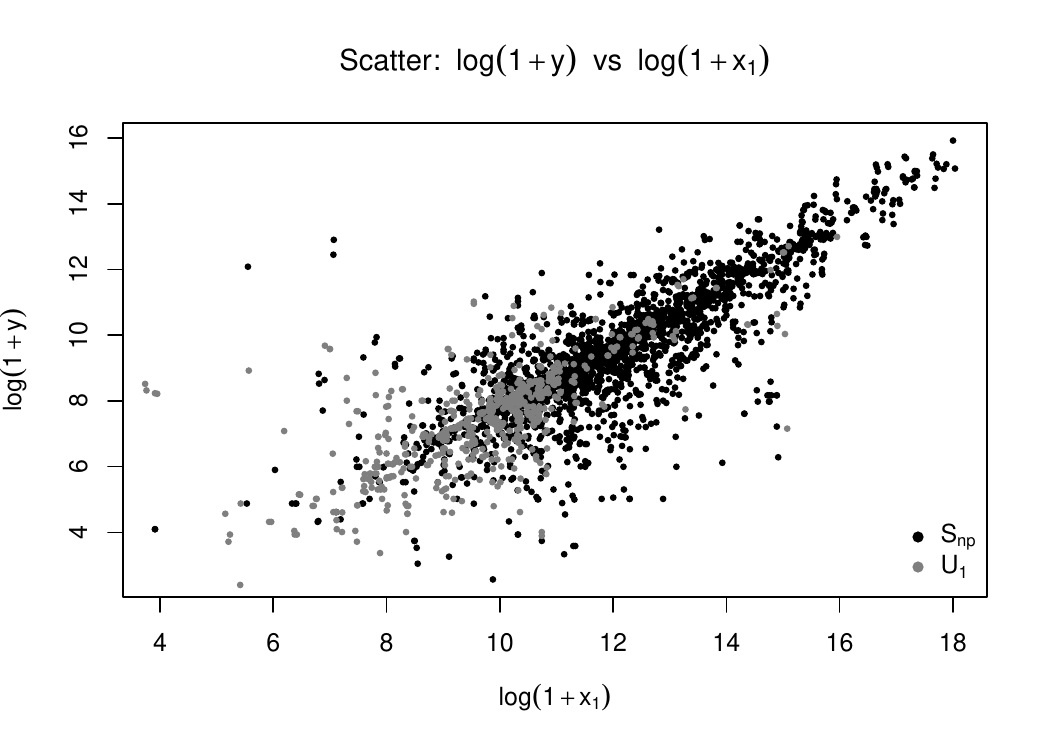}
\caption{Scatter of $\log(1+y)$ versus $\log(1+x_1)$.}
\end{subfigure}
\caption{Heterogeneity between the administrative pilot sample $S_{np}$ and its complement $U_1$.}
\label{fig:vat_heterogeneity}
\end{figure}

\subsubsection{Main results}

Table~\ref{tab:vat_mc_table_paper} summarizes the Monte Carlo performance of the core estimators under the estimated optimal design with $\pi_i^{(p)}\propto \hat{\sigma}_i$. The table also includes two additional rows that evaluate $\hat{Y}_{\text{sepDI}}(q^{(\sigma)})$ under two alternative Poisson designs on $U_1$: an equal-probability design with $\pi_i^{(p)}\equiv n_p/N_1$ and a Poisson probability proportional to size (PPS) design with $\pi_i^{(p)}\propto x_{1i}$. Relative to the estimated optimal design, both alternatives lead to a substantial loss of efficiency for $\hat{Y}_{\text{sepDI}}(q^{(\sigma)})$. In this application, the separate regression estimators $\hat{Y}_{\text{sepDI}}(q^{(b)})$ and $\hat{Y}_{\text{sepDI}}(q^{(\sigma)})$ attain the smallest RRMSE. The combined estimator $\hat{Y}_{\text{comDI}}(q^{(\sigma)})$ is less efficient, which is consistent with the strong heterogeneity between $S_{np}$ and $U_1$. For the core estimators, the variance estimator accuracy, summarized by $\bar{V}/V_{\mathrm{MC}}(\hat{Y})$, is close to one and the resulting Wald intervals have coverage reasonably close to the nominal $95\%$ level.

\begin{table}[!ht]
\centering
\small
\caption{Monte Carlo performance measures for the turnover application. The Design column indicates the second-stage Poisson design on $U_1$: ``optimal'' uses $\pi_i^{(p)}\propto \hat{\sigma}_i$, ``equal'' uses $\pi_i^{(p)}\equiv n_p/N_1$, and ``PPS'' uses $\pi_i^{(p)}\propto x_{1i}$. RB and RRMSE are expressed as percentages of $Y$.}
\label{tab:vat_mc_table_paper}
\begin{tabular}{llrrrr}
\hline
Estimator & Design & RB (\%) & RRMSE (\%) & $\bar{V}/V_{\mathrm{MC}}$ & Coverage \\
\hline
$\hat{Y}_{\text{DI}}$ & optimal & 0.003 & 0.096 & 0.988 & 0.941 \\
$\hat{Y}_{\text{HT}}^{\text{seq}}$ & optimal & 0.000 & 0.088 & 1.001 & 0.935 \\
$\hat{Y}_{\text{sepDI}}(q^{(b)})$ & optimal & 0.001 & 0.075 & 0.984 & 0.923 \\
$\hat{Y}_{\text{sepDI}}(q^{(\sigma)})$ & optimal & 0.001 & 0.076 & 0.982 & 0.923 \\
$\hat{Y}_{\text{sepDI}}(q^{(\sigma)})$ & equal & -0.003 & 0.239 & 0.592 & 0.827 \\
$\hat{Y}_{\text{sepDI}}(q^{(\sigma)})$ & PPS & 0.008 & 1.732 & 1.090 & 0.906 \\
$\hat{Y}_{\text{comDI}}(q^{(\sigma)})$ & optimal & 0.003 & 0.097 & 1.019 & 0.947 \\
$\hat{Y}_{\text{adDI}}(q^{(\sigma)})$ & optimal & 0.001 & 0.076 & 0.982 & 0.923 \\
\hline
\end{tabular}
\end{table}
Figure~\ref{fig:vat_boxplot} displays boxplots of relative errors (in percent) across Monte Carlo replications. The left panel compares the core estimators under the estimated optimal design, and the right panel highlights the effect of the probability sampling design in the second stage for $\hat{Y}_{\text{sepDI}}(q^{(\sigma)})$ under the three Poisson designs.

\begin{figure}[!ht]
\centering
\includegraphics[width=0.95\textwidth]{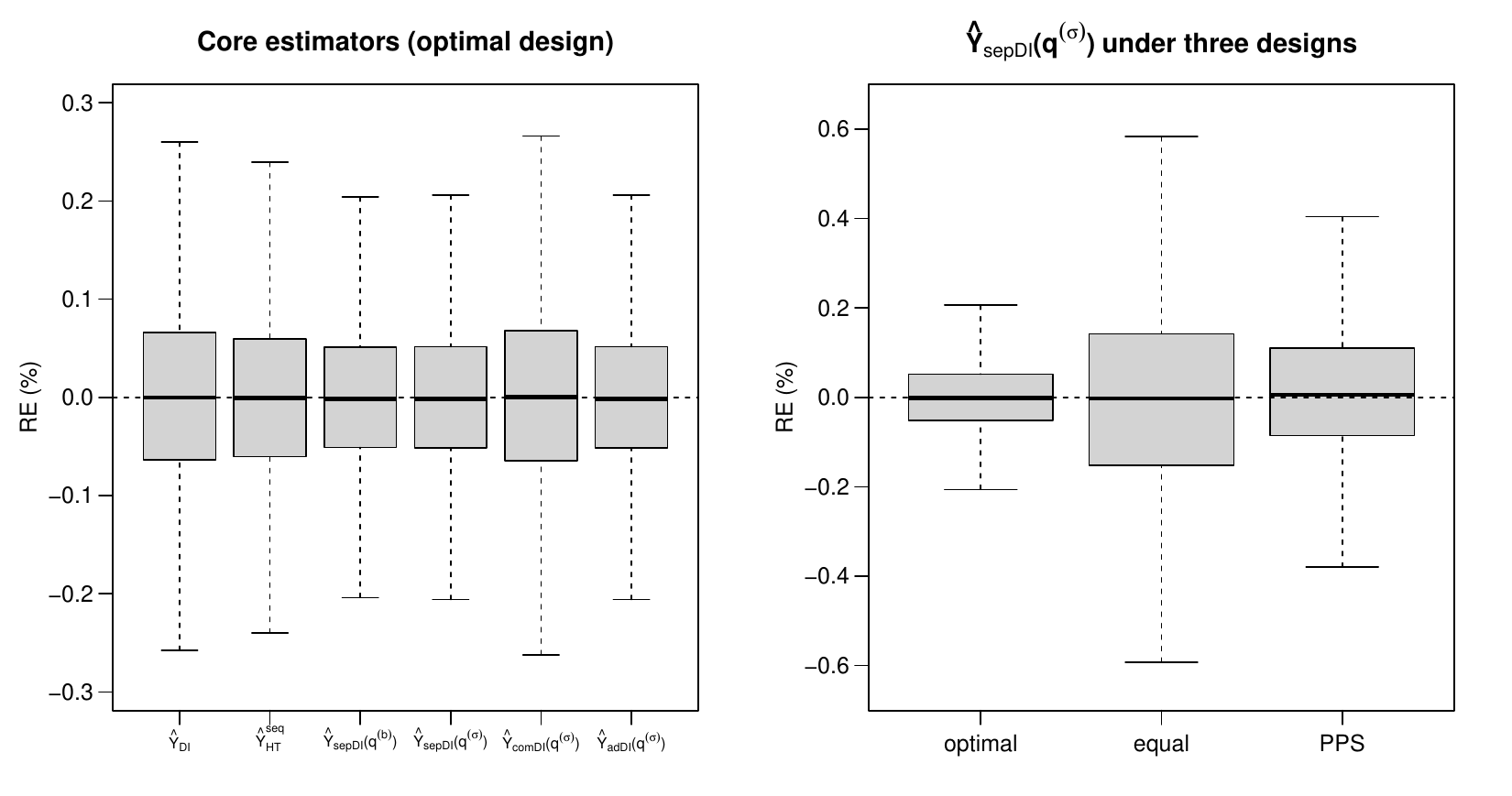}
\caption{Boxplots of relative errors $\mathrm{RE}=100(\hat{Y}-Y)/Y$ (in percent) across Monte Carlo replications. Outlier points are suppressed. For readability, the vertical axis limits are truncated at the $0.999$ quantile of $|\mathrm{RE}|$ in the left panel and at the $0.99$ quantile in the right panel (computed within each panel), so replications outside the displayed range are not shown.}
\label{fig:vat_boxplot}
\end{figure}

\subsubsection{Homogeneity test and adaptive choice}

Table~\ref{tab:vat_test_summary} reports the outcome of the coefficient homogeneity test at the $\alpha=0.05$ level. The rejection rate is essentially one, indicating strong evidence against coefficient homogeneity between $S_{np}$ and $U_1$ in this application. Accordingly, the adaptive estimator $\hat{Y}_{\text{adDI}}(q^{(\sigma)})$ in \eqref{eq:addi_rule} is nearly indistinguishable from the separate estimator $\hat{Y}_{\text{sepDI}}(q^{(\sigma)})$ in Table~\ref{tab:vat_mc_table_paper}.

\begin{table}[!ht]
\centering
\caption{Summary of the coefficient homogeneity test across Monte Carlo replications.}
\label{tab:vat_test_summary}
\begin{tabular}{lrrrr}
\hline
$R$ & $\alpha$ & Reject rate & Mean($p$-value) & Median($p$-value) \\
\hline
100\,000 & 0.05 & 0.99998 & 0.000021 & 0.000000 \\
\hline
\end{tabular}
\end{table}

\subsection{Drug sales data for pharmacies}\label{sec:7.2}

\subsubsection{Population, variables, and sampling}

We consider a finite pseudo-real population of $N=2\,426$ Lithuanian retail pharmacies. The study variable $y$ is the quarterly expenditure for reimbursed prescription medicines, recorded in cash register (receipt) data. As auxiliary information, we use $x_1$, defined as the quarterly expenditure on reimbursed prescription medicines from the national e-prescription system. We set $\bx_i=(1,x_{1i})^\top$.

The realized non-probability (voluntary) sample $S_{np}$ consists of units belonging to the two largest pharmacy chains and contains $n_{np}=1\,458$ units. Thus, the complementary stratum has size $N_1=968$, and the subsequent probability sample is selected from $U_1=U\setminus S_{np}$. In the second stage, we set the expected Poisson sample size to $n_p=387$ (about $40\%$ of $U_1$) and compute inclusion probabilities on $U_1$ under the estimated optimal design, $\pi_i^{(p)}\propto \hat{\sigma}_i$, scaled to satisfy $\sum_{i\in U_1}\pi_i^{(p)}=n_p$.

\subsubsection[Diagnostics: heterogeneity between Snp and U1]{Diagnostics: heterogeneity between $S_{np}$ and $U_1$}

Figure~\ref{fig:drugs_heterogeneity} compares the outcome distribution and the $y$--$x_1$ relationship across $S_{np}$ and $U_1$. For readability, we apply the transformation $t\mapsto \log(1+t)$, as in Figure~\ref{fig:vat_heterogeneity}. In contrast to the turnover application in Section~\ref{sec:7.1}, the two groups exhibit substantially weaker heterogeneity: the density plots overlap to a much larger extent, and the scatter plot suggests a comparable association between $y$ and $x_1$ in both strata.

\begin{figure}[!ht]
\centering
\begin{subfigure}{0.48\textwidth}
\centering
\includegraphics[width=\textwidth]{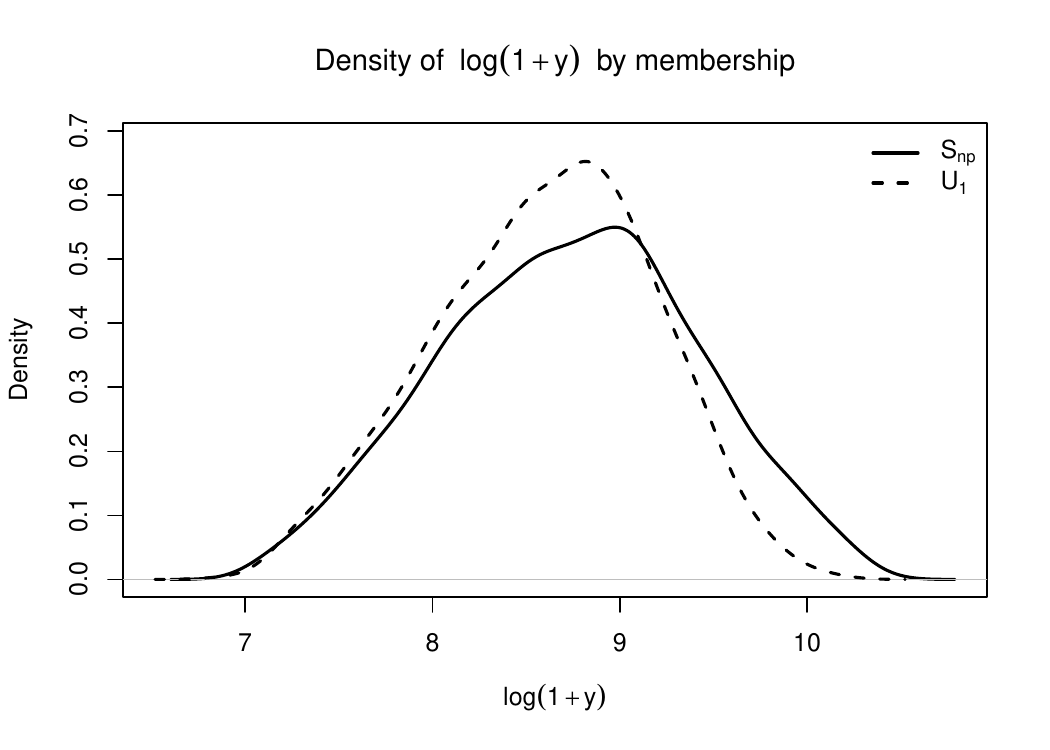}
\caption{Density of $\log(1+y)$ by sample membership.}
\end{subfigure}\hfill
\begin{subfigure}{0.48\textwidth}
\centering
\includegraphics[width=\textwidth]{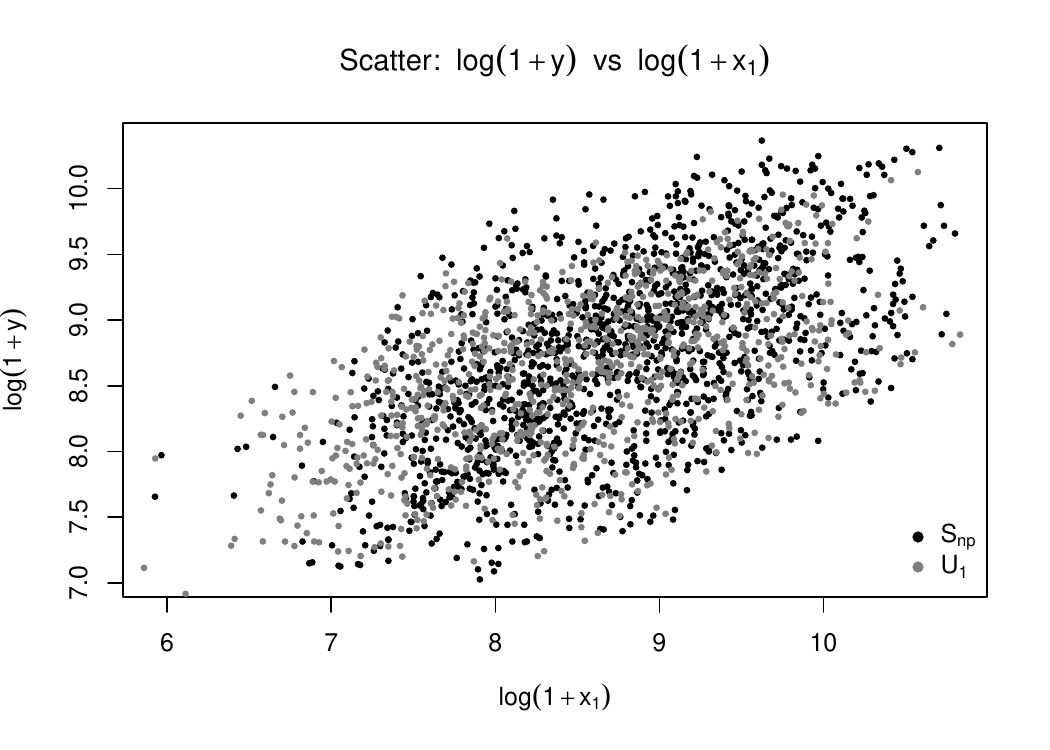}
\caption{Scatter of $\log(1+y)$ versus $\log(1+x_1)$.}
\end{subfigure}
\caption{Heterogeneity between the pharmacy pilot sample $S_{np}$ and its complement $U_1$.}
\label{fig:drugs_heterogeneity}
\end{figure}

\subsubsection{Main results}

Table~\ref{tab:drugs_mc_table_paper} reports Monte Carlo performance measures under the estimated optimal second-stage design with $\pi_i^{(p)}\propto \hat{\sigma}_i$. For $\hat{Y}_{\text{sepDI}}(q^{(\sigma)})$, we also include two alternative Poisson designs on $U_1$ (equal and PPS). Consistent with Figure~\ref{fig:drugs_heterogeneity}, the equal-Poisson design incurs only a modest loss of efficiency relative to the estimated optimal design, whereas the PPS design remains markedly inefficient. Within the optimal design, the regression-type estimators dominate $\hat{Y}_{\text{DI}}$ and $\hat{Y}_{\text{HT}}^{\text{seq}}$, with well-calibrated variance estimates and coverage close to the nominal $95\%$ level. Unlike in Section~\ref{sec:7.1}, the combined estimator $\hat{Y}_{\text{comDI}}(q^{(\sigma)})$ attains a slightly smaller RRMSE than $\hat{Y}_{\text{sepDI}}(q^{(\sigma)})$.

\begin{table}[!ht]
\centering
\small
\caption{Monte Carlo performance measures for the pharmacy application. The Design column indicates the second-stage Poisson design on $U_1$: ``optimal'' uses $\pi_i^{(p)}\propto \hat{\sigma}_i$, ``equal'' uses $\pi_i^{(p)}\equiv n_p/N_1$, and ``PPS'' uses $\pi_i^{(p)}\propto x_{1i}$. RB and RRMSE are expressed as percentages of $Y$.}
\label{tab:drugs_mc_table_paper}
\begin{tabular}{llrrrr}
\hline
Estimator & Design & RB (\%) & RRMSE (\%) & $\bar{V}/V_{\mathrm{MC}}$ & Coverage \\
\hline
$\hat{Y}_{\text{DI}}$ & optimal & 0.004 & 0.758 & 0.999 & 0.949 \\
$\hat{Y}_{\text{HT}}^{\text{seq}}$ & optimal & -0.001 & 1.502 & 1.005 & 0.950 \\
$\hat{Y}_{\text{sepDI}}(q^{(b)})$ & optimal & 0.002 & 0.670 & 0.996 & 0.948 \\
$\hat{Y}_{\text{sepDI}}(q^{(\sigma)})$ & optimal & 0.002 & 0.673 & 0.996 & 0.948 \\
$\hat{Y}_{\text{sepDI}}(q^{(\sigma)})$ & equal & 0.016 & 0.712 & 0.989 & 0.947 \\
$\hat{Y}_{\text{sepDI}}(q^{(\sigma)})$ & PPS & 0.013 & 3.164 & 1.306 & 0.939 \\
$\hat{Y}_{\text{comDI}}(q^{(\sigma)})$ & optimal & 0.001 & 0.656 & 1.003 & 0.950 \\
$\hat{Y}_{\text{adDI}}(q^{(\sigma)})$ & optimal & 0.002 & 0.673 & 0.996 & 0.948 \\
\hline
\end{tabular}
\end{table}

Figure~\ref{fig:drugs_boxplot} provides a graphical summary of relative errors; the display mirrors Figure~\ref{fig:vat_boxplot} to facilitate comparison across applications.

\begin{figure}[!ht]
\centering
\includegraphics[width=0.95\textwidth]{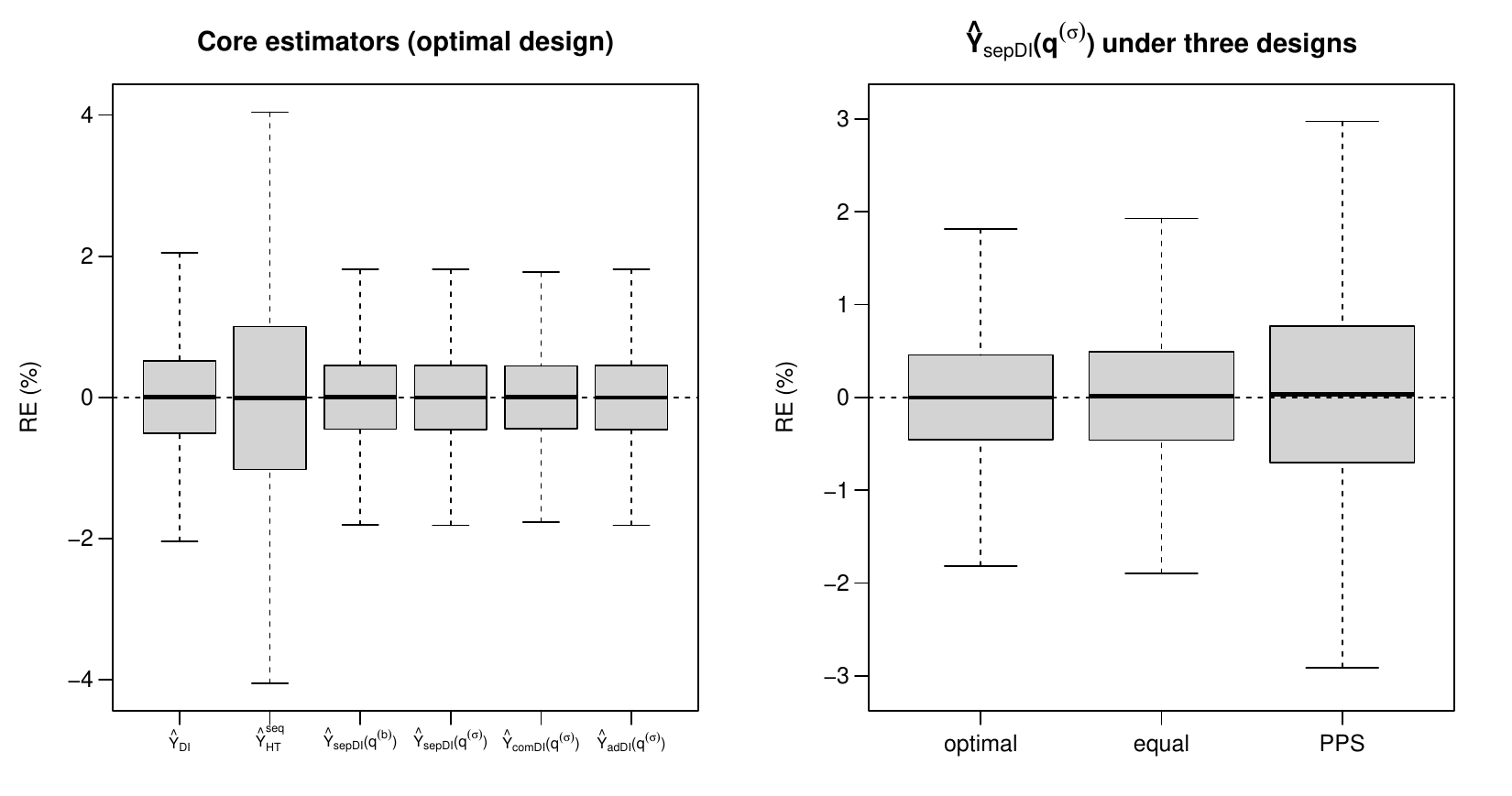}
\caption{Boxplots of relative errors $\mathrm{RE}=100(\hat{Y}-Y)/Y$ (in percent) across Monte Carlo replications. Outlier points are suppressed. For readability, the vertical axis limits are truncated at the $0.999$ quantile of $|\mathrm{RE}|$ in the left panel and at the $0.99$ quantile in the right panel (computed within each panel), so replications outside the displayed range are not shown.}
\label{fig:drugs_boxplot}
\end{figure}

\subsubsection{Homogeneity test and adaptive choice}

Table~\ref{tab:drugs_test_summary} reports the outcome of the coefficient homogeneity test at the $\alpha=0.05$ level. The rejection rate is close to one, and therefore the adaptive estimator $\hat{Y}_{\text{adDI}}(q^{(\sigma)})$ coincides with $\hat{Y}_{\text{sepDI}}(q^{(\sigma)})$ in this application. At the same time, Table~\ref{tab:drugs_mc_table_paper} shows that the combined estimator $\hat{Y}_{\text{comDI}}(q^{(\sigma)})$ achieves a modest RRMSE gain despite the statistical rejection of homogeneity. This is not contradictory: the test targets equality of regression coefficients, whereas the MSE difference between separate and combined estimation depends both on coefficient differences and on how different the auxiliary totals are between $S_{np}$ and $U_1$. In this application, the strata are fairly similar in terms of $x_1$, so even statistically detectable coefficient differences can have a limited practical impact on the overall MSE.

\begin{table}[!ht]
\centering
\caption{Summary of the coefficient homogeneity test across Monte Carlo replications.}
\label{tab:drugs_test_summary}
\begin{tabular}{lrrrr}
\hline
$R$ & $\alpha$ & Reject rate & Mean($p$-value) & Median($p$-value) \\
\hline
100\,000 & 0.05 & 0.99993 & 0.000213 & 0.000007 \\
\hline
\end{tabular}
\end{table}

\section{Conclusion}\label{sec:8}

A non-probability sample can be incorporated into finite-population inference by treating it as a pilot certainty stratum and drawing the probability sample only from the remaining units. This sequential framework leads to design-based data integration estimators that fit naturally into a standard stratified-sampling formulation, removing the need to model unknown selection probabilities and eliminating the tuning parameter required by fusion-type estimators. The resulting DI, separate GREG, and combined GREG estimators are transparent to interpret and straightforward to implement.

The paper develops two distinct theoretical contributions, layered rather than coupled. The first is a robustness result: both regression estimators are design-consistent under standard regularity conditions on the second-stage probability design, with no assumption whatsoever on the selection mechanism of the non-probability sample. Their plug-in variance estimators are likewise consistent. Design consistency, therefore, holds under MAR and NMAR alike, in contrast with inverse probability weighting and doubly robust estimators that lose this property under NMAR.

The second contribution is an efficiency result: under a working superpopulation model that holds in both strata, the pilot non-probability sample can be used to estimate the variance function and to construct second-stage inclusion probabilities that achieve Isaki--Fuller asymptotic optimality for the separate estimator. This optimality claim relies on assumptions strictly stronger than MAR. Crucially, however, its failure does not invalidate the inference; the design-consistency and variance-estimation results of the first contribution remain intact, so efficiency gains from the pilot-based design come at no cost to the validity of the inference.

The simulation study and real-data applications confirm both contributions. Under both MAR and NMAR mechanisms, the sequential estimators remain essentially unbiased, their variance estimators are well calibrated, and Wald intervals attain coverage close to the nominal $95\%$ level. The independent-sampling competitors that rely on MAR-type adjustments are competitive only when the MAR assumption holds and become seriously biased under NMAR. Within the sequential class, the regression-type estimators dominate the DI and stratified HT estimators, and the estimated optimal second-stage design is generally preferable to equal-probability and especially PPS sampling. The two real-data applications further illustrate that the practical choice between separate and combined regression depends on the heterogeneity between the pilot stratum and its complement: separate regression is the safer default in strongly heterogeneous settings (turnover application), whereas the combined estimator can offer a modest efficiency gain when the strata are similar (pharmacy application).

The coefficient homogeneity test is useful as a diagnostic, but its rejection event should be read as a sensitivity indicator rather than a binary selection rule between the separate and combined estimators. At the moderate-to-large sample sizes relevant for official statistics, the test rejects substantively negligible differences, and the resulting adaptive estimator typically defaults to the separate one. This is a conservative choice; refining the diagnostic to compare an estimated MSE difference against a tolerance scaled by the auxiliary-total contrast between $S_{np}$ and $U_1$ is a natural direction for future work.

\appendix

\section{Proofs}\label{app:proofs}

\begin{proposition}[Consistency of combined regression coefficient estimator]\label{prop:consistency-combined}
Let $N\to\infty$ with $n_p\to\infty$ and $n_{np}\to\infty$. Assume:
\begin{itemize}
  \item[\textnormal{(P1)}] The superpopulation mean model in \eqref{eq:superpop_model} holds in both population strata $S_{np}$ and $U_1$, i.e., $E_\xi(y_i\mid \bx_i)=\bx_i^\top\bbeta$. 
  \item[\textnormal{(P2)}] The covariates and outcomes satisfy
  \begin{equation*}
  \frac{1}{N}\sum_{i\in U}\|\bx_i\|^2<\infty \quad \text{and} \quad
  \frac{1}{N}\sum_{i\in U}y_i^2<\infty.
  \end{equation*}
  \item[\textnormal{(P3)}] There exists a constant $\pi_{\min}>0$ such that $\pi_i^{(p)}\ge \pi_{\min}$ for all $i\in U_1$, and $0<q_{\min}\le q_i\le q_{\max}<\infty$ for all $i\in U$.
  \item[\textnormal{(P4)}] For every component $z_i$ of $\pi_i^{(p)}q_i\bx_i\bx_i^\top$ and $\pi_i^{(p)}q_i\bx_i y_i$,
  \begin{equation*}
  \frac{1}{N_1}\sum_{i\in S_p}\frac{z_i}{\pi_i^{(p)}}=
  \frac{1}{N_1}\sum_{i\in U_1} z_i + o_p(1).
  \end{equation*}
  \item[\textnormal{(P5)}] The matrix $\bM_q = \sum_{i\in U_1}\pi_i^{(p)} q_i \bx_i\bx_i^\top$ is positive definite uniformly in $N$.
\end{itemize}
Then
\begin{equation}\label{eq:consist_Bcom}
\widehat{\bB}_{q;\text{com}} = \bB_q + o_p(1).
\end{equation}
\end{proposition}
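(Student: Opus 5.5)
The plan is to show that $\widehat{\bB}_{q;\text{com}}$ and $\bB_q$ are both within $o_p(1)$ of the superpopulation coefficient $\bbeta$ from (P1). Probability is taken jointly over the model $\xi$ and the second-stage design, conditional on the realized $S_{np}$. The claim \eqref{eq:consist_Bcom} then follows from the triangle inequality.

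Throughout I write $e_i=y_i-\bx_i^\top\bbeta$. I read (P1) as stratum-conditional, $E_\xi(e_i\mid\bx_i,\,i\in S_{np})=E_\xi(e_i\mid\bx_i,\,i\in U_1)=0$, exactly as in (L3). This is the only place the selection mechanism enters, and it is compatible with NMAR.

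\emph{Step 1 ($\bB_q\to\bbeta$).} Substituting $y_i=\bx_i^\top\bbeta+e_i$ into \eqref{eq:B_q} gives
\[
\bB_q-\bbeta=\bM_q^{-1}\sum_{i\in U_1}\pi_i^{(p)}q_i\bx_ie_i .
\]
The sum has model mean zero. Its model variance is at most $q_{\max}^2\sum_{i\in U_1}\|\bx_i\|^2V_\xi(y_i\mid\bx_i)$. Under (P2), strengthened if necessary to bounded conditional variances $V_\xi(y_i\mid\bx_i)\le\bar\sigma^2$, this variance is $O(N_1)$. By (P5) together with $\pi_i^{(p)}\ge\pi_{\min}$ and $q_i\ge q_{\min}$ from (P3), $\bM_q^{-1}=O(N_1^{-1})$. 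Hence $\bB_q-\bbeta=O_p(N_1^{-1/2})$.

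\emph{Step 2 (decomposition of $\widehat{\bB}_{q;\text{com}}$).} Write
\[
\bA=\bA_{np}+\bA_p,\qquad \bA_{np}=\sum_{i\in S_{np}}q_i\bx_i\bx_i^\top,\qquad \bA_p=\sum_{i\in S_p}q_i\bx_i\bx_i^\top .
\]
Then
\[
\widehat{\bB}_{q;\text{com}}-\bbeta=\bA^{-1}\Big(\sum_{i\in S_{np}}q_i\bx_ie_i+\sum_{i\in S_p}q_i\bx_ie_i\Big).
\]
Because $\sum_{i\in S_p}q_i\bx_i\bx_i^\top=\sum_{i\in S_p}(\pi_i^{(p)}q_i\bx_i\bx_i^\top)/\pi_i^{(p)}$, (P4) gives $N_1^{-1}\bA_p=N_1^{-1}\bM_q+o_p(1)$. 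With (P5) this yields $\bA^{-1}\preceq\bA_p^{-1}=O_p(N_1^{-1})$ in the Loewner order, since $\bA_{np}\succeq 0$.

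The $S_p$ term is handled with (P4) applied to the components of $\pi_i^{(p)}q_i\bx_iy_i$ and $\pi_i^{(p)}q_i\bx_i\bx_i^\top$. It gives
\[
N_1^{-1}\sum_{i\in S_p}q_i\bx_ie_i=N_1^{-1}\sum_{i\in U_1}\pi_i^{(p)}q_i\bx_ie_i+o_p(1),
\]
and the right-hand side is $o_p(1)$ by Step 1. So $\bA^{-1}\sum_{i\in S_p}q_i\bx_ie_i=o_p(1)$.

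\emph{Step 3 (the pilot term, the main obstacle).} The difficulty is that no positive-definiteness condition is imposed on $S_{np}$, and $n_{np}$ may be much larger than $N_1$. A crude bound $\|\bA^{-1}\|\cdot\|\sum_{S_{np}}q_i\bx_ie_i\|=O_p(N_1^{-1}n_{np}^{1/2})$ therefore need not vanish.

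To get around this I would bound the conditional model covariance directly. Given $\bx$, $S_{np}$ and $S_p$ (note $\bA$ does not involve $y$), let $\bm u=\bA^{-1}\sum_{i\in S_{np}}q_i\bx_ie_i$. Using independence of the $e_i$,
\[
V_\xi(\bm u)=\bA^{-1}\Big(\sum_{i\in S_{np}}q_i^2\bx_i\bx_i^\top V_\xi(y_i\mid\bx_i)\Big)\bA^{-1}\preceq q_{\max}\bar\sigma^2\,\bA^{-1}\bA_{np}\bA^{-1}\preceq q_{\max}\bar\sigma^2\,\bA^{-1}.
\]
The last inequality holds because $\bA_{np}\preceq\bA$. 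Combined with $\bA^{-1}=O_p(N_1^{-1})$ from Step 2, this gives $\bm u=O_p(N_1^{-1/2})=o_p(1)$ by a conditional Chebyshev argument.

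Two points need care here. One is justifying the conditional-variance step when $S_{np}$ is selected depending on $y$; that is precisely why (P1) is taken stratum-conditionally, with independence of the $e_i$ within $S_{np}$. The other is the strengthening of (P2) to bounded conditional variances, or a truncation argument in its place. Combining Steps 1–3 gives $\widehat{\bB}_{q;\text{com}}-\bB_q=o_p(1)$.
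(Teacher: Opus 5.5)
Your argument is correct in outline and has the same skeleton as the paper's: show $\bB_q\to\bbeta$ and $\widehat{\bB}_{q;\text{com}}\to\bbeta$, then use the triangle inequality. You handle the pilot stratum by a genuinely different device, however.

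The paper applies a law of large numbers on $S_{np}$, namely $n_{np}^{-1}\sum_{i\in S_{np}}q_i\bx_i\bx_i^\top\to\bA$ and $n_{np}^{-1}\sum_{i\in S_{np}}q_i\bx_iy_i\to\bA\bbeta$. It rescales by $\rho_N=n_{np}/N_1$ and passes to subsequences with $\rho_N\to\rho^\star\in[0,\infty)$, so that $\widehat{\bB}_{q;\text{com}}\to(\bM_{q,N}+\rho^\star\bA)^{-1}(\bM_{q,N}+\rho^\star\bA)\bbeta=\bbeta$.

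You never take a limit of the pilot Gram matrix. $\bA_{np}\succeq 0$ enters only through $\bA^{-1}\preceq\bA_p^{-1}$, and the pilot score is controlled by $\bA^{-1}\bA_{np}\bA^{-1}\preceq\bA^{-1}$, uniformly in $n_{np}/N_1$. This buys you three things:
\begin{itemize}
\item No limit or positive-definiteness condition on the pilot Gram matrix is needed. The paper identifies its $\bA$ with a $U_1$ quantity in order to borrow (P5).
\item The regime $n_{np}/N_1\to\infty$ is covered. The paper's step $N_1^{-1}\sum_{i\in S_{np}}q_i\bx_i\bx_i^\top=\rho_N\bA+o_p(1)$ and its subsequence argument both need $\rho_N$ bounded.
\item You obtain the rate $O_p(N_1^{-1/2})$.
\end{itemize}
Your explicit extra assumptions (bounded conditional variances, exogeneity within each stratum) are ones the paper uses implicitly. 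They enter its weak law for $N_1^{-1}\sum_{i\in U_1}\pi_i^{(p)}q_i\bx_ie_i$ and its LLN on $S_{np}$ ``together with (P1)''. Note that $\bM_q^{-1}=O(N_1^{-1})$ requires reading (P5) as uniform positive definiteness of $N_1^{-1}\bM_q$. The bounds $\pi_{\min}$ and $q_{\min}$ do not supply this, though the paper relies on the same reading.

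One step needs repair. In Step 3 you compute the model variance of the pilot score conditionally on $S_p$. In the paper's intended design, $\pi_i^{(p)}$ is built from the pilot fit via $\hat\sigma_i$, which is a function of $y_i$, $i\in S_{np}$. So $S_p$ is not independent of $(e_i)_{i\in S_{np}}$, and given $S_p$ these errors need not be centered with variance $V_\xi(y_i\mid\bx_i)$.

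The conclusion survives without that conditioning. Since $\bm s=\sum_{i\in S_{np}}q_i\bx_ie_i$ lies in the range of $\bA_{np}$, write $\bm s=\bA_{np}^{1/2}\bm w$ with $\bm w=(\bA_{np}^{1/2})^{+}\bm s$. Then $\bA^{-1/2}\bA_{np}\bA^{-1/2}\preceq\bm{I}$ gives
\[
\|\bm u\|\le\|\bA^{-1/2}\|\,\|\bA^{-1/2}\bA_{np}^{1/2}\|\,\|\bm w\|\le\|\bA_p^{-1/2}\|\,\|\bm w\|.
\]
Here $\|\bA_p^{-1/2}\|=O_p(N_1^{-1/2})$ by your Step 2. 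The vector $\bm w$ involves only the pilot stratum and satisfies $E_\xi(\|\bm w\|^2\mid\bx,S_{np})\le q_{\max}\bar\sigma^2\,\mathrm{rank}(\bA_{np})$. Hence $\bm u=O_p(N_1^{-1/2})$ as you claimed, with no conditioning on the second-stage draw.
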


\subsection{Proof of Proposition~\ref{prop:consistency-combined}}\label{app:prop-consistency}

\begin{proof}
Define the normalized quantities
\begin{equation*}
\hat{\bM}_{q,N} := \frac{1}{N_1}\sum_{i\in S_{np}} q_i \bx_i \bx_i^\top + \frac{1}{N_1}\sum_{i\in S_p} q_i \bx_i \bx_i^\top,\qquad
\hat{\btau}_{q,N} := \frac{1}{N_1}\sum_{i\in S_{np}} q_i \bx_i y_i + \frac{1}{N_1}\sum_{i\in S_p} q_i \bx_i y_i,
\end{equation*}
based on the sample $S=S_{np}\cup S_p$, and their population counterparts
\begin{equation*}
\bM_{q,N} := \frac{1}{N_1}\sum_{i\in U_1}\pi_i^{(p)} q_i \bx_i\bx_i^\top,\qquad
\btau_{q,N} := \frac{1}{N_1}\sum_{i\in U_1}\pi_i^{(p)} q_i \bx_i y_i.
\end{equation*}
By \textnormal{(P4)} with $n_p\to\infty$, together with \textnormal{(P2)} and \textnormal{(P3)},
\begin{equation*}
\frac{1}{N_1}\sum_{i\in S_p} q_i \bx_i \bx_i^\top = \bM_{q,N} + o_p(1),\qquad
\frac{1}{N_1}\sum_{i\in S_p} q_i \bx_i y_i = \btau_{q,N} + o_p(1).
\end{equation*}
Since $n_{np}\to\infty$, a weak law of large numbers (LLN) together with \textnormal{(P1)} yields
\begin{equation*}
\frac{1}{n_{np}} \sum_{i\in S_{np}} q_i \bx_i \bx_i^\top \xrightarrow{p} \bA,\qquad
\frac{1}{n_{np}} \sum_{i\in S_{np}} q_i \bx_i y_i \xrightarrow{p} \bA \bbeta,
\end{equation*}
where $\bA := E_\xi(q_i \bx_i \bx_i^\top \mid i\in U_1)$, which is positive definite by (P5) together with (P3).
Introducing $\rho_N := n_{np}/N_1 \ge 0$, we can write
\begin{equation*}
\frac{1}{N_1}\sum_{i\in S_{np}} q_i \bx_i \bx_i^\top = \rho_N \bA + o_p(1),\qquad
\frac{1}{N_1}\sum_{i\in S_{np}} q_i \bx_i y_i = \rho_N \bA \bbeta + o_p(1).
\end{equation*}
Therefore, collecting the convergence results, we have
\begin{equation}\label{eq:M_tau_conv}
\hat{\bM}_{q,N} = \bM_{q,N} + \rho_N \bA + o_p(1),\qquad
\hat{\btau}_{q,N} = \btau_{q,N} + \rho_N \bA \bbeta + o_p(1).
\end{equation}
Write $y_i=\bx_i^\top\bbeta+e_i$ with $E_\xi(e_i\mid\bx_i)=0$. Then, by \textnormal{(P2)} and a weak law applied to $N_1^{-1}\sum_{i\in U_1}\pi_i^{(p)}q_i\bx_i e_i$, we obtain
\begin{equation}\label{eq:tau_conv}
\btau_{q,N} = \bM_{q,N}\,\bbeta + o_p(1).
\end{equation}

Since $\bM_q$ is positive definite by \textnormal{(P5)}, so is $\bM_{q,N}$, hence $\bM_{q,N}+\rho_N\bA$ for every $\rho_N\ge 0$. For any subsequence with $\rho_N\to\rho^\star\in[0,\infty)$, using \eqref{eq:M_tau_conv} and \eqref{eq:tau_conv},
\begin{equation*}
\hat{\bM}_{q,N} \xrightarrow{p} \bM_{q,N} + \rho^\star \bA,\qquad
\hat{\btau}_{q,N} \xrightarrow{p} (\bM_{q,N} + \rho^\star \bA)\bbeta,
\end{equation*}
and thus
\begin{equation*}
\widehat{\bB}_{q;\text{com}} = \hat{\bM}_{q,N}^{-1}\hat{\btau}_{q,N} \xrightarrow{p}
(\bM_{q,N} + \rho^\star \bA)^{-1}(\bM_{q,N} + \rho^\star \bA)\bbeta = \bbeta.
\end{equation*}
Since every subsequence has the same probability limit, we have $\widehat{\bB}_{q;\text{com}} \xrightarrow{p} \bbeta$. Moreover, by \textnormal{(P1)}, \textnormal{(P2)}, and \textnormal{(P5)}, $\bB_q \xrightarrow{p} \bbeta$, so \eqref{eq:consist_Bcom} holds.
\end{proof}

\subsection{Proof of Lemma~\ref{lem:unifV}}\label{app:lem-unifV}

\begin{proof}
Write $m_i=\bx_i^\top\bbeta$ and $\hat m_i=\bx_i^\top\hat{\bbeta}_{np}$. By assumptions (L1) and (L3)--(L4), the pilot estimator satisfies $\hat{\bbeta}_{np}\xrightarrow{p}\bbeta$. Therefore,
\begin{equation}\label{conv}
\sup_{i\in U_1}\big|\,\hat m_i-m_i\,\big|
\;\le\;\|\hat{\bbeta}_{np}-\bbeta\|\,\sup_{i\in U_1}\|\bx_i\|
\;\xrightarrow{p}\;0 ,
\end{equation}
using (L1). Fix any $\varepsilon\in(0,c)$. Since $\inf_{i\in U_1} m_i\ge c$ by (L2) and \eqref{conv} holds,
\begin{equation*}
P_\xi\!\Big(\min_{i\in U_1}\hat m_i \;\ge\; \min_{i\in U_1}m_i - \sup_{i\in U_1}|\hat m_i-m_i| \;\ge\; c-\varepsilon\Big)\to1,
\end{equation*}
hence, taking $\varepsilon=c/2$,
\begin{equation*}
P_\xi\!\Big(\min_{i\in U_1}\hat m_i \ge c/2\Big)\to1.
\end{equation*}

Under \eqref{eq:pow_mod} and (L3)--(L4), a standard M-estimation argument for the log-log variance fit yields $(\hat\gamma,\hat\sigma^2)\xrightarrow{p}(\gamma,\sigma^2)$. Since eventually $\hat m_i\in[c/2,M]$ with deterministic $M>C\|\bbeta\|$ and $\hat\gamma\in[\gamma/2,2\gamma]$, the map $(m,\gamma)\mapsto m^\gamma$ is uniformly continuous on $[c/2,M]\times[\gamma/2,2\gamma]$, hence
\begin{equation*}
\sup_{i\in U_1}\big|\,\hat m_i^{\hat\gamma}-m_i^{\gamma}\,\big|\xrightarrow{p}0.
\end{equation*}
Finally,
\begin{equation*}
\sup_{i\in U_1}\big|\,\hat\sigma^2\,\hat m_i^{\hat\gamma}-\sigma^2 m_i^{\gamma}\,\big|
\;\le\;
|\hat\sigma^2-\sigma^2|\,\sup_{i\in U_1}\hat m_i^{\hat\gamma}
+\sigma^2\,\sup_{i\in U_1}\big|\hat m_i^{\hat\gamma}-m_i^{\gamma}\big|
\;\xrightarrow{p}\;0,
\end{equation*}
because $\sup_{i\in U_1}\hat m_i^{\hat\gamma}=O_p(1)$ due to bounded covariates and $\hat\gamma=O_p(1)$. This proves the claim.
\end{proof}

\subsection{Proof of Theorem~\ref{th:1}}\label{app:thm-opt}

\begin{proof}
By Lemma~\ref{lem:unifV} and the continuous mapping theorem,
\begin{equation}\label{eq:sqrtVconv}
\sup_{i\in U_1}\Big|\{\hat V_\xi(y_i\mid\bx_i)\}^{1/2}-\{V_\xi(y_i\mid\bx_i)\}^{1/2}\Big|\xrightarrow{p}0.
\end{equation}
Let
\begin{equation*}
S_V=\sum_{j\in U_1}\{V_\xi(y_j\mid\bx_j)\}^{1/2},
\qquad
\hat S_V=\sum_{j\in U_1}\{\hat V_\xi(y_j\mid\bx_j)\}^{1/2}.
\end{equation*}
From \eqref{eq:sqrtVconv},
\begin{equation*}
\frac{\hat S_V-S_V}{N_1}
\;\le\;
\sup_{j\in U_1}\Big|\{\hat V_\xi(y_j\mid\bx_j)\}^{1/2}-\{V_\xi(y_j\mid\bx_j)\}^{1/2}\Big|
\;\xrightarrow{p}\;0.
\end{equation*}
By (L1)--(L2) and \eqref{eq:pow_mod}, the terms
$\{V_\xi(y_i\mid\bx_i)\}^{1/2}$ are uniformly bounded and bounded away from zero on $U_1$, so $S_V/N_1$ and $\hat S_V/N_1$ have finite nonzero probability limits. Therefore,
\begin{equation}\label{eq:unifpiconv}
\max_{i\in U_1}\big|\hat{\pi}_i^{(p)}-\pi_i^{(p)}\big|
=
n_p\,\max_{i\in U_1}
\left|
\frac{\{\hat V_\xi(y_i\mid\bx_i)\}^{1/2}}{\hat S_V}-\frac{\{V_\xi(y_i\mid\bx_i)\}^{1/2}}{S_V}
\right|
\;\xrightarrow{p}\;0,
\end{equation}
so $\hat{\pi}_i^{(p)}\xrightarrow{p}\pi_i^{(p)}$ uniformly in $i\in U_1$.
The Isaki--Fuller probabilities \eqref{eq:opt_pi} minimize the anticipated variance of $\hat Y_{\text{sepDI}}$ over designs on $U_1$. By \eqref{eq:unifpiconv}, the estimated design converges in probability to that optimal design; under (T1)--(T2), the anticipated variance functional is continuous in the design probabilities (they are uniformly bounded away from zero). Hence, the anticipated variance under \eqref{eq:est_opt_pi} converges to the minimum. Because (T2) keeps the sampling fraction nondegenerate, the design does not become sparse; therefore \eqref{eq:est_opt_pi} is asymptotically optimal for $\hat Y_{\text{sepDI}}$.
\end{proof}

\bibliographystyle{apalike}  
\bibliography{literature2}

\end{document}